\theoremstyle{plain}
\newtheorem{theoreme}{Theorem}[section]
\newtheorem{lemme}[theoreme]{Lemma}
\theoremstyle{definition}
\newtheorem{definition}[theoreme]{Definition}
\newtheorem{rmq}[theoreme]{Remark}
\renewcommand{\thefigure}{\thesection.\@arabic\c@figure}
\begin{document}

\author{Fabien Heuwelyckx}
\dedicatory{Université de Mons}
\address{Institut Complexys, Département de Mathématique,
Université de Mons, 20~Place du Parc, 7000 Mons, Belgium. This work
was supported by the National Bank of Belgium.}
\email{fabien.heuwelyckx@umons.ac.be}
\title[Convergence of lookback options in the binomial model]{CONVERGENCE OF 
EUROPEAN LOOKBACK OPTIONS 
WITH FLOATING STRIKE IN THE BINOMIAL MODEL}
\keywords{Binomial model, lookback, floating strike, Black-Scholes,
convergence, asymptotic}

\begin{abstract}
  In this article we study the convergence of a European lookback
  option with floating strike evaluated with the binomial model of
  Cox-Ross-Rubinstein to its evaluation with the Black-Scholes model.
  We do the same for its delta.  We confirm that these convergences
  are of order $1/\sqrt{n}$.  For this, we use the binomial model of
  Cheuk-Vorst which allows us to write the price of the option using a
  double sum. Based on an improvement of a lemma of Lin-Palmer, we are
  able to give the precise value of the term in $1/\sqrt{n}$ in the
  expansion of the error; we also obtain the value of the term
  in~$1/n$ if the risk free interest rate is non zero.  This
  modelisation will also allow us to determine the first term in the
  expansion of the delta.
\end{abstract}

\maketitle

\section{Introduction}
The goal of this paper is to study the rate of convergence for the
price and the delta of the European lookback option with floating
strike given by the Cox-Ross-Rubinstein~\cite{CRR} binomial model.
Our main results solve a problem posed by Lin and
Pal\-mer~\cite{lin_palmer}. As far as we know the convergence to the
Black-Scholes price as the number of periods tends to infinity has not
yet been proved for lookback options. This result seems plausible in
view of what is happening with other options. The existence of a
limit, however, is known from~Jiang and~Dai~\cite{jiang_dai}. We will
look at this problem in the case where the evaluation is performed at
time~$t=0$.

A European lookback call gives the holder the right to buy the
underlying at maturity for its lowest price during its lifetime. The
payoff function for the lookback call is given by
\[
C^{fl}=\max (S_T-\min_{t\leq T}S_t,0)=S_T-\min_{t\leq T}S_t,
\]
where $S_t$ is the price of underlying asset at a fixed time~$t$. A
European lookback put gives the holder the right to sell the
underlying at maturity for its highest price during its lifetime. The
payoff function for the lookback put is given by
\[
P^{fl}=\max (\max_{t\leq T}S_t-S_T,0)=\max_{t\leq T}S_t-S_T.
\]
In fact, both options will always be exercised by its holder, and they
are naturally more expensive than a standard option.

The formula for the Black-Scholes price has been obtained by Goldman,
Sosin and Gatto~\cite{goldman_sosin}. The value of the call at
time~$t=0$ is
\begin{equation}\label{formule-BS-call}
C^{fl}_{BS}=S_0\Big(1+\frac{\sigma^2}{2r}\Big)\Phi(a_1)
-S_0 e^{-rT}\Big(1-\frac{\sigma^2}{2r}\Big)\Phi(a_2)
-S_0\frac{\sigma^2}{2r},
\end{equation}
and for the put it is
\begin{equation}
P^{fl}_{BS}=C^{fl}_{BS}-S_0\Big(1-e^{-rT}\Big)\Big(1-\frac{\sigma^2}{2r}\Big),
\end{equation}
where $a_1=(r/\sigma+\sigma/2)\sqrt{T}$ and
$a_2=(r/\sigma-\sigma/2)\sqrt{T}$.  We have used the usual notation
for the different parameters: $S_0$~as the initial value of the stock
price, $r$~as the spot rate, $\sigma$~as the volatility of the
underlying asset, $T$~as the time to maturity and $\Phi$~as the
standard normal cumulative distribution function. 

The main theorems and their proofs use several times the following
notion of asymptotic expansion that admits variable but bounded
coefficients; see Diener-Diener~\cite{diener_diener}.
\begin{definition}
  Let $(f_k)_{k\geq 0}$ be a sequence of bounded functions of $n$. We
  will say that a function $f$ has an \emph{asymptotic expansion in
    powers of~$n^{-1/2}$ to the order~$m$} ($m$ an integer greater
  than~$-2$) if
\[
\sqrt{n}^{\,m+1}
\Big(f-\sum_{k=-2}^m \frac{f_i}{\sqrt{n}^{\,k}}\Big)
\]
is bounded.
\end{definition}
\noindent We will confirm that the rate of convergence of European
lookback options with floating strike is of order~$n^{-1/2}$, and we
derive formulas for the coefficients of~$n^{-1/2}$ and~$n^{-1}$ (the
latter provided that $r\neq 0$) in the asymptotic expansion.
Specifically, in section~\ref{section-price}, we give an asymptotic
expansion of the type
\begin{equation}\label{da-options}
\Pi^{fl}_n=\Pi_{BS}^{fl}+\frac{\Pi_1}{\sqrt{n}}+\frac{\Pi_2}{n}
+O\Big(\frac{1}{n^{3/2}}\Big),
\end{equation}
where $\Pi_{BS}^{fl}$ is the Black-Scholes price. In
section~\ref{section-delta} we obtain the corresponding formula for
the delta.
\section{The Cheuk-Vorst Model}
We first recall how to price a European lookback option with floating
strike in a Cox-Ross-Rubinstein~(CRR) binomial model and in particular
with the Cheuk-Vorst~\cite{cheuk_vorst} lattice with one state
variable which is equivalent to the one studied. We choose this
approach following the suggestion of Lin and Palmer~\cite{lin_palmer}.
However, there exists another approach of similar complexity given by
Föllmer and Schied~\cite{follmer_schied}, see
Appendix~\ref{annexe-equivalence}.

We use the notation~$n$ for the number of periods. Conventional
assumptions are that the proportional upward jump~$u_n$ and the
downward jump~$d_n$ are given by
\[
u_n=e^{\sigma\sqrt{T/n}}\quad\text{and}\quad
d_n=u_n^{-1}=e^{-\sigma\sqrt{T/n}}.
\]
The options of our interest are path-dependent, which means that the
traditional tree does not work because the price depends not only on
the node but also on the trajectory. Hull and White~\cite{hull_white}
have developed a binomial model with nodes subdivided into different
states. The problem is that this tree has a lot of information at
each node and needs more calculations. To remedy this, Cheuk and
Vorst have created a modified tree, still constructed by backward
induction, where the value associated with a specific node depends
only on the time and on the difference in powers of~$u_n$ between the
present and the lowest value (or highest for the put) of the
underlying from time~$t=0$ to the present time.  For the call this
difference is the value of the integer~$j$ such that
\[
S_t=\Big(\min_{t\leq T}S_t\Big) u_n^j,
\]
with $S_t$~the price of the underlying at the time~$t$. For the put
this difference is the value of the integer~$j$ such that
\[
S_t=\Big(\max_{t\leq T}S_t\Big) u_n^{-j}.
\]
For a fixed number of steps~$n$ we call $V^*_{m,n}$, with $m$ an
integer between 0 and~$n$, the random variable describing the number
of levels above the minimum after $m$~periods.  We create a similar
random variable~$W^*_{m,n}$ expressing the number of levels below the
maximum.  With all this we can write exact formulas for the prices of
the two types of European lookback options with floating strike.
\begin{theoreme}\label{CV-formulas}
  The price of a European lookback call with floating strike is
  $C^{fl}_n=S_0 V_n(0,0)$, and the price of a European lookback put
  with floating strike is $P^{fl}_n=S_0 W_n(0,0)$, with
\begin{equation}\label{def-V}
V_n(0,0)=\sum_{j=0}^{n}(1-u_n^{-j})\, P(V^*_{0,n}=0,V^*_{n,n}=j)
\end{equation}
and
\begin{equation}\label{def-W}
W_n(0,0)=\sum_{j=0}^{n}(u_n^j-1)\, P(W^*_{0,n}=0,W^*_{n,n}=j).
\end{equation}
The probabilities are given by
\begin{equation}\label{densite-V}
P(V^*_{0,n}=0, V^*_{m,n}=j)=\sum_{k=j}^l \Lambda_{j,k,m}\,q_n^k (1-q_n)^{m-k},
\end{equation}
and
\begin{equation}\label{densite-W}
P(W^*_{0,n}=0,W^*_{m,n}=j)=\sum_{k=j}^l \Lambda_{j,k,m}\,(1-q_n)^k q_n^{m-k},
\end{equation}
for $j=0,1,\dots,m$, where 
\[
l=\Big\lfloor\frac{m+j}{2}\Big\rfloor,\quad
q_n=p_n u_n e^{-r\frac{T}{n}},\quad 
p_n=\frac{e^{r\frac{T}{n}}-d_n}{u_n-d_n},
\]
and 
\[
\Lambda_{j,k,m}=\binom{m}{k-j}-\binom{m}{k-j-1}
\]
if $k>j$ and $\Lambda_{j,k,m}=1$ if $k=j$.
\end{theoreme}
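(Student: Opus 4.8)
The plan is to prove Theorem~\ref{CV-formulas} by combining risk-neutral valuation with a change of numeraire and a reflection-principle count for a simple random walk. First I would use that in the CRR model the no-arbitrage price is the discounted expectation of the payoff under the risk-neutral measure $Q$, under which each period is independently an up-move (multiplication by $u_n$) with probability $p_n$. After $n$ periods, by the very definition of the quantities $V^*_{n,n}$ and $W^*_{n,n}$ one has $\min_{t\le T}S_t=S_T u_n^{-V^*_{n,n}}$ and $\max_{t\le T}S_t=S_T u_n^{W^*_{n,n}}$, so that $C^{fl}=S_T-\min_{t\le T}S_t=S_T\bigl(1-u_n^{-V^*_{n,n}}\bigr)$ and $P^{fl}=\max_{t\le T}S_t-S_T=S_T\bigl(u_n^{W^*_{n,n}}-1\bigr)$. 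Hence $C^{fl}_n=e^{-rT}E_Q\bigl[S_T(1-u_n^{-V^*_{n,n}})\bigr]$ and likewise for the put.

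Next I would carry out the change of numeraire that turns $S_T$ into a constant: using the density $\mathrm{d}Q^*/\mathrm{d}Q=e^{-rT}S_T/S_0$ on the terminal $\sigma$-algebra, the martingale property of the discounted stock shows that under $Q^*$ each period is independently an up-move with probability $q_n=p_n u_n e^{-rT/n}$ and a down-move with probability $1-q_n=(1-p_n)d_n e^{-rT/n}$ (these summing to $1$ because $p_nu_n+(1-p_n)d_n=e^{rT/n}$). This gives $C^{fl}_n=S_0 E_{Q^*}[1-u_n^{-V^*_{n,n}}]=S_0\sum_{j=0}^n(1-u_n^{-j})\,Q^*(V^*_{n,n}=j)$, which is \eqref{def-V} once one notes that $V^*_{0,n}=0$ almost surely (the minimum over the single instant $t=0$ being $S_0$ itself), so $P(V^*_{0,n}=0,V^*_{n,n}=j)=Q^*(V^*_{n,n}=j)$; the mirror computation gives \eqref{def-W}.

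The heart of the proof is the law of $V^*_{m,n}$ under $Q^*$. I would first check that $V^*$ is the simple random walk reflected at $0$: writing $R_i$ for the unreflected walk that moves by $+1$ on an up-move and $-1$ on a down-move, one has $V^*_{m,n}=R_m-\min_{0\le i\le m}R_i$, and a one-step case analysis shows an up-move sends $V^*$ from $v$ to $v+1$ while a down-move sends it from $v$ to $\max(v-1,0)$. Fixing the number $k$ of up-moves among the first $m$ periods, the event $\{V^*_{m,n}=j\}$ is exactly $\{\min_{0\le i\le m}R_i=2k-m-j\}$, which forces $j\le k\le l=\lfloor(m+j)/2\rfloor$. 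The number of paths with $k$ up-moves whose minimum is at most $2k-m-j$ equals, by reflecting the path after its first visit to that level, the number of length-$m$ paths ending at $2k-m-2j$, namely $\binom{m}{k-j}$; reflecting instead at the next lower level gives $\binom{m}{k-j-1}$ for minimum at most $2k-m-j-1$, so subtracting leaves $\Lambda_{j,k,m}$ paths with minimum exactly $2k-m-j$ (the boundary case $k=j$ reducing to the single path of all down-moves followed by all up-moves, consistently with $\Lambda_{j,j,m}=1$). Weighting each such path by $q_n^k(1-q_n)^{m-k}$ and summing over $k$ yields \eqref{densite-V}; exchanging the roles of up and down (equivalently $q_n\leftrightarrow 1-q_n$) gives the corresponding statement for $W^*$ and hence \eqref{densite-W}.

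The step I expect to be the main obstacle is the reflection bookkeeping: one must track the parity of $m+j$ so that $l$ is a genuine integer, treat the extreme indices $k=j$ and $k=l$ where a binomial coefficient may vanish or the reflected endpoint degenerates, and make sure the case $m+j$ even with $k=l$ — where the path never goes strictly below $0$ — is still covered by the same formula (so that reflection at the level $0$ itself is handled uniformly). It is also worth remarking why this reflected-walk description is the same as the Cheuk-Vorst lattice built by backward induction, i.e. that the value attached to a node depends only on $(m,V^*_{m,n})$: this is precisely the Markov property of $V^*$ under $Q^*$ together with the fact that the (already undiscounted) payoff is a function of $V^*_{n,n}$ alone. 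Finally, the identity $\sum_{j}P(V^*_{m,n}=j)=1$, obtained because the inner sums of the $\Lambda_{j,k,m}$ telescope to $\binom{m}{k}$, provides a convenient consistency check on the combinatorics.
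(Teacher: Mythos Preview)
Your argument is correct and in fact more self-contained than the paper's. The paper does two things differently. First, for the representation \eqref{def-V} (and \eqref{def-W}) it simply invokes points $\langle 8\rangle$, $\langle 11\rangle$, $\langle 13\rangle$ of Cheuk--Vorst, i.e.\ it takes the backward-induction lattice and the appearance of the transition probability $q_n$ as given, whereas you derive them from scratch via the change of num\'eraire $\mathrm{d}Q^*/\mathrm{d}Q=e^{-rT}S_T/S_0$; your route explains \emph{why} the weights are $q_n,1-q_n$ rather than $p_n,1-p_n$. Second, for the path count $\Lambda_{j,k,m}$ the paper argues by induction on $m$, splitting on whether the last step into level $j$ came from $j\pm 1$ (with the special treatment of $j=0$), and verifying the Pascal-type identity $\Lambda_{j,k,m}=\Lambda_{j-1,k-1,m-1}+\Lambda_{j+1,k,m-1}$ case by case; you instead identify $V^*_{m,n}=R_m-\min_{i\le m}R_i$ and read off $\Lambda_{j,k,m}$ directly from the reflection principle as a difference of two ``minimum $\le$ level'' counts. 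Your approach is shorter and more conceptual; the paper's induction is more elementary but hides the random-walk structure. The boundary cases you flag ($k=j$, $k=l$, and the parity of $m+j$) are genuine but routine once one checks that reflection at level $0$ returns $\binom{m}{k}=\binom{m}{k-j}$ by symmetry when $2k=m+j$, so no gap remains.
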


\noindent Note that we use the same notation, $V_n(j,m)$, as Cheuk and Vorst for
the value associated with a node, where $j$~refers
to the level of the node and $m$~refers to the time~$mT/n$, see
Figure~\ref{fig-cheuk}.

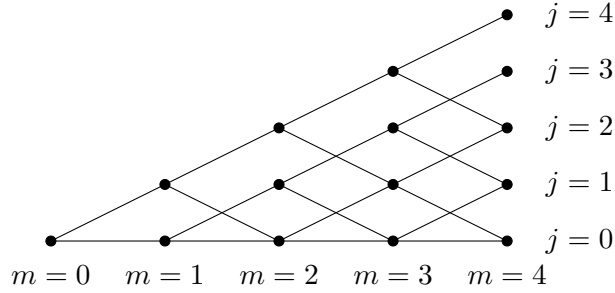
\begin{figure}[h!]
\begin{center}
\begin{tikzpicture}[scale=0.75]
\draw (0,0) -- (2,0) -- (4,0) -- (6,0) -- (8,0);
\draw (0,0) -- (2,1) -- (4,2) -- (6,3) -- (8,4);
\draw (2,0) -- (4,1) -- (6,2) -- (8,3);
\draw (2,1) -- (4,0) -- (6,1) -- (8,2);
\draw (4,1) -- (6,0) -- (8,1);
\draw (4,2) -- (6,1) -- (8,0);
\draw (6,2) -- (8,1);
\draw (6,3) -- (8,2);
\filldraw (0,0) circle(2.5pt);
\filldraw (2,0) circle(2.5pt);
\filldraw (2,1) circle(2.5pt);
\filldraw (4,0) circle(2.5pt);
\filldraw (4,1) circle(2.5pt);
\filldraw (4,2) circle(2.5pt);
\filldraw (6,0) circle(2.5pt);
\filldraw (6,1) circle(2.5pt);
\filldraw (6,2) circle(2.5pt);
\filldraw (6,3) circle(2.5pt);
\filldraw (8,0) circle(2.5pt);
\filldraw (8,1) circle(2.5pt);
\filldraw (8,2) circle(2.5pt);
\filldraw (8,3) circle(2.5pt);
\filldraw (8,4) circle(2.5pt);
\node[right] at (8.5,0){$j=0$};
\node[right] at (8.5,1){$j=1$};
\node[right] at (8.5,2){$j=2$};
\node[right] at (8.5,3){$j=3$};
\node[right] at (8.5,4){$j=4$};
\node[below] at (0,-0.25){$m=0$};
\node[below] at (2,-0.25){$m=1$};
\node[below] at (4,-0.25){$m=2$};
\node[below] at (6,-0.25){$m=3$};
\node[below] at (8,-0.25){$m=4$};
\end{tikzpicture}
\caption{Cheuk-Vorst lattice for the call with $n=4$}\label{fig-cheuk}
\end{center}
\end{figure}

We will provide the proof of this result in
Appendix~\ref{annexe-CVdensity}. In formula~(\ref{densite-V}),
respectively~(\ref{densite-W}), the variable~$k$ expresses the
different possibilities for the number of ups, respectively downs,
made by the underlying to arrive at the level~$j$ after $m$~periods.
While equation~(\ref{def-V}) can be deduced from the points $\langle
8\rangle$, $\langle 11\rangle$ and $\langle 13\rangle$
in~\cite{cheuk_vorst}, the explicit formulas~(\ref{densite-V})
and~(\ref{densite-W}) were not given by Cheuk and Vorst in their
paper. They are crucial for our analysis.

\section{Intermediate lemma}
In this section, we develop a lemma that will be very useful in the
proof of the main theorems. It will give us the asymptotic expansion
of expressions which can be written as complementary cumulative
distribution functions of binomial distributions. For a possible
future application we will obtain a more general result than is needed
for the present paper.
\begin{lemme}\label{lemme-nouveau}
Suppose that
\[
p_n=\frac{1}{2}+\frac{\alpha}{\sqrt{n}}+\frac{\beta}{n}
+\frac{\gamma}{n^{3/2}}+\frac{\delta}{n^2}
+O\Bigl(\frac{1}{n^{5/2}}\Bigr)
\]
and
\[
j_n=\frac{n}{2}+a\sqrt{n}+\frac{1}{2}+b_n+\frac{c}{\sqrt{n}}+\frac{d}{n}
+O\Bigl(\frac{1}{n^{3/2}}\Bigr),
\]
where the sequence~$(b_n)_n$ is bounded. Then
\begin{align*}
&\sum_{k=j_n}^{n}\binom{n}{k}p_n^k(1-p_n)^{n-k}\\
&\quad=\Phi(A)+\frac{e^{-A^2/2}}{\sqrt{2\pi}}
\Bigl(\frac{B_n}{\sqrt{n}}+\frac{C_0\!-\!C_2 B_n^2}{n}
+\frac{D_0\!-\!D_1 B_n\!-\!D_3 B_n^3}{n^{3/2}}\Bigr)
+O\Bigl(\frac{1}{n^2}\Bigr),
\end{align*}
where 
\begin{itemize}
\item[$A=$\!] $2(\alpha-a)$,
\item[$B_n=$\!] $2(\beta-b_n)$,
\item[$C_0=$\!] $2(\alpha^2 A+\gamma-c)+(2\alpha/3-A/12)(1-A^2)$,
\item[$C_2=$\!] $A/2$,
\item[$D_0=$\!] $2(2\alpha\beta A+\delta-d)+2(1-A^2)\beta/3$,
\item[$D_1=$\!] $(1-4A^2+A^4)/12-2\alpha(\alpha-A-\alpha A^2+A^3/3)
+2A(\gamma-c)$,
\item[$D_3=$\!] $(1-A^2)/6$,
\end{itemize}
and $\Phi$ is the cumulative distribution function of the standard normal
distribution.
\end{lemme}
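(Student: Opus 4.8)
\emph{Overview.}
The left-hand side is the complementary distribution function $P(X_n\ge j_n)$ of a random variable $X_n$ with binomial law $\mathcal B(n,p_n)$, evaluated at a threshold lying a bounded number of standard deviations from the mean $np_n$; since $p_n\to\tfrac12$ the variable $X_n$ is asymptotically symmetric, so we are squarely in the central-limit regime and the claim is a fully explicit Edgeworth-type expansion of a binomial tail. The plan has two stages. First I would produce, for $X_n\sim\mathcal B(n,p_n)$ with $p_n$ ranging over a fixed compact subset of $(0,1)$, an expansion of the tail to order $n^{-2}$ whose coefficients are explicit rational functions of $p_n$; then I would substitute the hypotheses on $p_n$ and $j_n$ and reorganise by powers of $n^{-1/2}$.

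\emph{Stage 1: a uniform tail expansion.}
Write $q_n=1-p_n$, $\sigma_n=\sqrt{np_nq_n}$, and let $x_n=(j_n-\tfrac12-np_n)/\sigma_n$ be the continuity-corrected standardised threshold; this is where the isolated $+\tfrac12$ of $j_n$ is used, the bounded fractional discrepancy of the integer lower summation index being folded into $b_n$. I would establish
\[
\sum_{k=j_n}^{n}\binom{n}{k}p_n^k q_n^{n-k}=\Phi(-x_n)+\phi(x_n)\Bigl(\frac{R_1(x_n)}{\sqrt n}+\frac{R_2(x_n)}{n}+\frac{R_3(x_n)}{n^{3/2}}\Bigr)+O(n^{-2}),
\]
where $\phi=\Phi'$ and $R_1,R_2,R_3$ are polynomials whose coefficients are built from the cumulants (equivalently the central moments) of $\mathcal B(n,p_n)$. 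Two self-contained routes are available. One may use the incomplete-Beta identity $\sum_{k\ge m}\binom nk p_n^kq_n^{n-k}=\mathrm{B}(m,n{+}1{-}m)^{-1}\int_0^{p_n}t^{m-1}(1-t)^{n-m}\,dt$ (with $m$ the integer lower limit) and apply Laplace's method: expand $\log\bigl(t^{m-1}(1-t)^{n-m}\bigr)$ about its maximiser $t^\star=(m-1)/(n-1)$ through fourth order, rescale by the local width $\sim(4n)^{-1/2}$, integrate term by term against the Gaussian — its truncation at the rescaled value of $p_n$ produces the factor $\Phi$ — and expand the three Gamma factors of the normalising constant by Stirling; this route involves no lattice considerations at all. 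Alternatively one combines the refined local de Moivre--Laplace expansion of $P(X_n=k)$ with Euler--Maclaurin summation over $k$, in which case the continuity correction places the endpoint at a half-integer, where the leading sawtooth (odd Bernoulli) term of the Esseen lattice expansion vanishes, so that only the smooth part survives to order $n^{-2}$.

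\emph{Stage 2: substitution.}
From the hypotheses, $np_n=\tfrac n2+\alpha\sqrt n+\beta+\gamma/\sqrt n+\delta/n+O(n^{-3/2})$ and $p_nq_n=\tfrac14-\varepsilon_n^2$ with $\varepsilon_n=\alpha/\sqrt n+\beta/n+\gamma/n^{3/2}+\cdots$, whence $\sigma_n=\tfrac12\sqrt n\,(1-2\alpha^2/n-4\alpha\beta/n^{3/2}-\cdots)$ and, collecting terms,
\[
x_n=-A-\frac{B_n}{\sqrt n}-\frac{2(\alpha^2A+\gamma-c)}{n}+O(n^{-3/2}),\qquad A=2(\alpha-a),\quad B_n=2(\beta-b_n),
\]
so that $\phi(x_n)=\phi(-x_n)$ and $x_n$ stay in a compact set; carried one order further, the $n^{-3/2}$-coefficient of $x_n$ is affine in $b_n$. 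I would also expand the cumulants of $\mathcal B(n,p_n)$ in powers of $n^{-1/2}$: because $p_n$ is within $O(n^{-1/2})$ of $\tfrac12$, all standardised cumulants of order $\ge 3$ are of size $O(n^{-1})$, so the Edgeworth corrections $R_i(x_n)$ feed the expansion only from the $n^{-1}$-level onward and there is no genuine $n^{-1/2}$ Edgeworth term. Now Taylor-expand $\Phi(-x_n)$ about $\Phi(A)$ and $\phi(x_n)$ about $\phi(A)$, using $\phi'=-x\phi$ (hence $\Phi''(A)=-A\phi(A)$, $\Phi'''(A)=(A^2-1)\phi(A)$), substitute into the Stage 1 formula, and collect powers of $n^{-1/2}$, factoring out $\phi(A)=e^{-A^2/2}/\sqrt{2\pi}$. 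The $n^{-1/2}$-coefficient is $B_n$. At order $n^{-1}$ the quadratic Taylor term of $\Phi$ yields $-\tfrac A2B_n^2$ (hence $C_2=A/2$) and its linear term yields $2(\alpha^2A+\gamma-c)$, while the surviving Edgeworth and Euler--Maclaurin contributions supply $(2\alpha/3-A/12)(1-A^2)$; together these give $C_0$. The $n^{-3/2}$-coefficient is assembled identically: the cubic Taylor term of $\Phi$ contributes $\tfrac{A^2-1}{6}B_n^3=-D_3B_n^3$, the cross terms together with the density-prefactor expansion contribute the $B_n$-linear part $-D_1B_n$, and the $B_n$-free remainder is $D_0$. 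That the answer is a polynomial in $B_n$ with $b_n$-free coefficients $C_i,D_i$ is structural: $b_n$ enters only through $x_n$, and only linearly in its $n^{-1/2}$- and $n^{-3/2}$-coefficients, so after the Taylor expansion it can surface only through $B_n$ and $B_n^3$ — exactly the displayed shape, with no $C_1B_n$ or $D_2B_n^2$ term.

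\emph{Main obstacle.}
There is no conceptual difficulty; the content is classical CLT asymptotics. The work, and the only real risk, lies in the bookkeeping: pinning down $R_1,R_2,R_3$ exactly (in the Beta route, the fourth-order Laplace expansion together with Stirling's expansion of three Gamma functions; in the local-CLT route, the local Edgeworth coefficients and the Euler--Maclaurin corrections), expanding $x_n$ and the cumulants of $\mathcal B(n,p_n)$ consistently to order $n^{-3/2}$, carrying the nested Taylor expansions through, and verifying that the resulting mass of terms collapses to the compact coefficients $C_0,C_2,D_0,D_1,D_3$ with every sign correct. A secondary point is uniformity: the $O(n^{-2})$ error must hold uniformly in the bounded sequence $b_n$, which it does because $b_n$ enters every estimate only through quantities bounded in terms of $\sup_n|b_n|$; in the local-CLT route one must also check that the periodic Bernoulli remainders are $O(n^{-2})$, which is precisely what the continuity correction secures.
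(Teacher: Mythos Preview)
Your proposal is correct and follows essentially the same two-stage route as the paper: obtain an Edgeworth-type tail expansion of the binomial in terms of the standardised, continuity-corrected threshold $\xi=x_n$, then expand $\xi$ in powers of $n^{-1/2}$ and Taylor-expand $\Phi$ and $\phi$ about $A$ to collect the coefficients. The only difference is that for Stage~1 the paper does not derive the tail expansion from scratch via the Beta/Laplace or local-CLT routes you sketch, but instead quotes a ready-made intermediate result (Lemma~\ref{lemme-intermediaire}, a simplified form of a lemma of Lin and Palmer) that already supplies the explicit correction polynomials $-(1-\xi^2)(1-2p_n)/\bigl(6\sqrt{p_n(1-p_n)}\bigr)$ and $\xi(1-\xi^2)/12$; Stage~2 then proceeds exactly as you describe.
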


The proof of this lemma (see Appendix~\ref{annexe-proof-lemma-new})
will be based on the following result, which is a simplified version
of a lemma of Lin and Palmer~\cite{lin_palmer}; its proof will be
given in Appendix~\ref{annexe-proof-lemma}.

\begin{lemme}\label{lemme-intermediaire}
Provided $p_n=1/2\,+\,O(n^{-1/2})$ 
as $n\to\infty$ and $0\leq j_n\leq n+1$ for $n$ sufficiently large, then
\begin{equation}\label{formule-lemma}
\begin{split}
&\sum_{k=j_n}^{n}\binom{n}{k}p_n^k(1-p_n)^{n-k}
= \frac{1}{\sqrt{2\pi}}\int_{\xi}^{\infty}\!\!e^{-u^2/2}\,\mathrm{d}u\\
&\quad -\frac{(1-\xi^2)e^{-\xi^2/2}}{6\sqrt{2\pi}}
\frac{1-2p_n}{\sqrt{p_n(1-p_n)}}
\frac{1}{\sqrt{n}}
+\frac{\xi(1-\xi^2)e^{-\xi^2/2}}{12\sqrt{2\pi}}\frac{1}{n}
+O\Bigl(\frac{1}{n^2}\Bigr),
\end{split}
\end{equation}
where $\xi=\frac{j_n-np_n-1/2}{\sqrt{np_n(1-p_n)}}$.
\end{lemme}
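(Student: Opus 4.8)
The plan is to read the left-hand side as the upper tail $P(S_n\ge j_n)$ of a variable $S_n\sim\mathrm{Bin}(n,p_n)$ and to assemble an Edgeworth-type expansion by pairing a pointwise expansion of the binomial weights with a careful passage from the sum to an integral. Write $\sigma_n=\sqrt{np_n(1-p_n)}$ and, for an integer index $k$, $x_k=(k-np_n)/\sigma_n$, so that the continuity-corrected lower abscissa is exactly $\xi=(j_n-\tfrac12-np_n)/\sigma_n$. First I would expand the weight $b_{n,k}=\binom{n}{k}p_n^k(1-p_n)^{n-k}$ by taking its logarithm and inserting Stirling's series $\log m!=m\log m-m+\tfrac12\log(2\pi m)+\tfrac1{12m}+O(m^{-3})$ into $\log n!$, $\log k!$ and $\log(n-k)!$; regrouping in powers of $x_k$ and of $n^{-1/2}$ produces the local Edgeworth form
\[
b_{n,k}=\frac{1}{\sigma_n}\,\phi(x_k)\Bigl(1+\tfrac{\gamma_1}{6}H_3(x_k)+\tfrac{\gamma_2}{24}H_4(x_k)+R_{n,k}\Bigr),
\]
where $\phi(x)=e^{-x^2/2}/\sqrt{2\pi}$, the $H_m$ are Hermite polynomials, and $\gamma_1=(1-2p_n)/\sigma_n$ and $\gamma_2=(1-6p_n(1-p_n))/\sigma_n^2$ are the skewness and excess kurtosis of $S_n$. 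The hypothesis $p_n=1/2+O(n^{-1/2})$ is already doing work here: every odd cumulant of the Bernoulli law vanishes at $p=1/2$ and so carries a factor $(1-2p_n)=O(n^{-1/2})$, whence the fifth-cumulant correction and all mixed terms are $O(n^{-2})$, while the even corrections beyond the fourth are intrinsically $O(n^{-2})$. Hence $R_{n,k}=O\bigl(n^{-2}(1+|x_k|^{M})\bigr)$ for a fixed $M$, uniformly for $|x_k|$ up to a slowly growing window.

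The second step is to sum over $k\ge j_n$ and replace the sum by an integral. Since the integers $k$ are the midpoints of the unit intervals whose left boundary is $j_n-\tfrac12$, the midpoint form of Euler--Maclaurin gives, for a smooth decaying $g$,
\[
\sum_{k\ge j_n}g(k)=\int_{j_n-1/2}^{\infty}\!g(t)\,\mathrm{d}t+\tfrac1{24}\,g'\!\bigl(j_n-\tfrac12\bigr)+O(n^{-2}),
\]
which I would apply to $g(t)=\sigma_n^{-1}f_n\bigl((t-np_n)/\sigma_n\bigr)$, where $f_n(x)=\phi(x)\bigl(1+\tfrac{\gamma_1}{6}H_3(x)+\tfrac{\gamma_2}{24}H_4(x)\bigr)$. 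After the substitution $x=(t-np_n)/\sigma_n$ the leading integral is $\int_\xi^\infty\phi(x)\,\mathrm{d}x=\tfrac1{\sqrt{2\pi}}\int_\xi^\infty e^{-u^2/2}\,\mathrm{d}u$, the main term of~(\ref{formule-lemma}), and every correction integral collapses to $\phi(\xi)$ times a polynomial through the Hermite identity $\int_\xi^\infty H_m(x)\phi(x)\,\mathrm{d}x=H_{m-1}(\xi)\phi(\xi)$. At order $n^{-1/2}$ only the skewness integral survives, giving $\tfrac{\gamma_1}{6}H_2(\xi)\phi(\xi)=-\tfrac16(1-\xi^2)\phi(\xi)\,\gamma_1$, which is precisely the stated term once $\gamma_1=(1-2p_n)\big/\bigl(\sqrt{p_n(1-p_n)}\,\sqrt n\bigr)$ is inserted.

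The delicate point is the coefficient of $n^{-1}$, where two separately $p_n$-dependent contributions must combine into something clean: the kurtosis integral $\tfrac{\gamma_2}{24}H_3(\xi)\phi(\xi)$ and the Euler--Maclaurin endpoint term $\tfrac1{24}\sigma_n^{-2}\phi'(\xi)=-\tfrac{\xi}{24}\sigma_n^{-2}\phi(\xi)$. This is the main obstacle of the proof. Here the hypothesis is decisive: it forces $\gamma_1^2=O(n^{-2})$ (so no squared-skewness term appears), $\gamma_2=-2/n+O(n^{-2})$, and $\sigma_n^{-2}=4/n+O(n^{-2})$. Substituting, the kurtosis integral becomes $-\tfrac1{12}H_3(\xi)\phi(\xi)/n=\tfrac{3\xi-\xi^3}{12}\phi(\xi)/n$ and the endpoint term becomes $-\tfrac{2\xi}{12}\phi(\xi)/n$; their sum is exactly $\tfrac1{12}\xi(1-\xi^2)\phi(\xi)/n$, with all dependence on $p_n$ having cancelled except through $\xi$. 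This reproduces the $n^{-1}$ term of~(\ref{formule-lemma}).

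It remains to make the error uniform and of the claimed size. The Euler--Maclaurin remainder is controlled by $g'''$, which carries $\sigma_n^{-4}=O(n^{-2})$, and the endpoint correction applied to the $\gamma_1,\gamma_2$ pieces brings an extra $\sigma_n^{-2}=O(n^{-1})$, so both are $O(n^{-2})$. For the local part, the pointwise remainder $\sigma_n^{-1}\phi(x_k)R_{n,k}$ sums to $O(n^{-2})$ because the Gaussian weight absorbs the polynomial growth in $|x_k|$, namely $\int\phi(x)(1+|x|^{M})\,\mathrm{d}x<\infty$; beyond the window the binomial weights decay exponentially by a standard large-deviation bound, so the truncated mass is far below $n^{-2}$. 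Assembling the two steps yields formula~(\ref{formule-lemma}) with the stated $O(n^{-2})$ control, which completes the argument.
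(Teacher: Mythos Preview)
Your argument is correct in outline and, once the uniformity bookkeeping is filled in, gives the lemma; but it is a genuinely different route from the paper's. The paper does not build the Edgeworth expansion at all: it invokes a two-sided lemma of Lin and Palmer (itself resting on Chang and Palmer) that already provides the expansion with a lower abscissa $\xi_1=\xi$ and an upper abscissa $\xi_2=(n(1-p_n)+1/2)/\sigma_n$, and then observes that $p_n=\tfrac12+O(n^{-1/2})$ forces $\xi_2=\sqrt n+O(1)$, so every $\xi_2$-term is exponentially small and can be absorbed into $O(n^{-2})$. In other words, the paper's proof is a one-line simplification of an imported result, whereas you re-derive that result from Stirling and the midpoint Euler--Maclaurin formula. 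What your approach buys is transparency: it shows exactly why the hypothesis $p_n=\tfrac12+O(n^{-1/2})$ matters beyond the leading term---it kills the squared-skewness contribution $\gamma_1^2$ and pins $\gamma_2$ and $\sigma_n^{-2}$ to the specific numerical values that make the kurtosis integral and the endpoint correction collapse to the clean $\tfrac{1}{12}\xi(1-\xi^2)\phi(\xi)/n$. What the paper's approach buys is brevity and a uniformity guarantee in $\xi$ that comes for free from the cited lemma; in your version that uniformity over the full range $0\le j_n\le n+1$ has to be argued separately (your large-deviation remark handles the extreme-$\xi$ regime, but this step deserves a line or two more).
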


As previously announced we will use a special situation of this lemma
to prove the main results, since the values of $\beta$, $\delta$, $a$,
$c$ and~$d$ will be zero in the case that we are interested in. Note
that with this restriction we obtain $A=2\alpha$, $B_n=-2 b_n$,
$C_0=\alpha /2+2\alpha^3+2\gamma$, $C_2=\alpha$, $D_0=0$,
$D_1=(1+8\alpha^2+48\alpha^4)/12 +4\alpha\gamma$ and
$D_3=(1-4\alpha^2)/6$.
\section{Convergence of the price}\label{section-price}
In this section, we give the first main results of this paper, where
we obtain the convergence of the CRR binomial model to the
Black-Scholes formula and the first two coefficients in the asymptotic
expansion in powers of~$n^{-1/2}$ of the lookback option if the risk
free interest rate is not null.
These coefficients are constants. 
With this finding, the standard Richardson extrapolation could be used to obtain
solutions with a~higher convergence rate.

In contrast, the coefficients are not necessarily constant in the case of
other options as Diener-Diener~\cite{diener_diener} and
Walsh~\cite{walsh} demonstrated for vanilla options.
For these options, it is well known that there are oscillations.
For barrier options, Gobet~\cite{gobet} has proved zigzag convergence, while Lin and
Palmer~\cite{lin_palmer} have obtained an asymptotic expansion with non constant 
coefficients.

If we compute the price of a lookback option with floating strike at emission~$t=0$, 
there is no zigzag convergence; however, as numerical experiments show, for a 
general emission time~$t$ there will be oscillations.

\begin{theoreme}\label{thm-call}
  In the n-period CRR binomial model, let~$S_0$ be the initial stock
  price, $r$~the risk free interest rate, $\sigma$~the volatility and
  $T$~the time to maturity. If $r\neq 0$ then the asymptotic formula
  for the price of the European lookback call option with floating
  strike is
\[
\begin{split}
C_{n}^{fl}&=C^{fl}_{BS}+\frac{\sigma\sqrt{T}}{2}
(C^{fl}_{BS}-S_0)\frac{1}{\sqrt{n}}\\
&\quad +\Big[\frac{\sigma^2 T}{12}\Big(C_{BS}^{fl}
+2S_0 \Big[\Phi(a_1)-e^{-rt}\Phi(a_2)-\frac{3}{2}\,\Big]\Big)
+S_0\frac{\sigma\sqrt{T}}{2}\frac{e^{-a_1^2/2}}{\sqrt{2\pi}}\Big]\frac{1}{n}\\
&\quad +O\Bigl(\frac{1}{n^{3/2}}\Bigr),
\end{split}
\]
where $C^{fl}_{BS}$ is the price given by the Black-Scholes model,
$\Phi$ the standard normal cumulative distribution function,
$a_1=(r/\sigma+\sigma/2)\sqrt{T}$ and
$a_2=(r/\sigma-\sigma/2)\sqrt{T}$.
\end{theoreme}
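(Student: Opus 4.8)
The plan is to start from the Cheuk--Vorst representation of Theorem~\ref{CV-formulas}, $C^{fl}_n = S_0 V_n(0,0)$ with $V_n(0,0)=\sum_{j=0}^n(1-u_n^{-j})\,P(V^*_{0,n}=0,V^*_{n,n}=j)$, and to reduce the double sum to a \emph{fixed} finite linear combination of complementary binomial cumulative distribution functions, to which Lemma~\ref{lemme-nouveau} applies directly. Since $\sum_{j}P(V^*_{0,n}=0,V^*_{n,n}=j)=1$ (which follows from~(\ref{densite-V}) together with $\sum_{i=0}^{M}(\binom ni-\binom n{i-1})=\binom nM$ and $\binom n{\min(k,n-k)}=\binom nk$), one first writes $V_n(0,0)=1-\sum_{j=0}^n u_n^{-j}P(V^*_{n,n}=j)$. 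I would then insert the density~(\ref{densite-V}), exchange the order of summation so that the number $k$ of up-steps is the outer index, and use the telescoping identity
\[
\sum_{i=0}^{M}u_n^i\Bigl(\binom ni-\binom n{i-1}\Bigr)=u_n^{M}\binom nM+(1-u_n)\sum_{i=0}^{M-1}u_n^i\binom ni .
\]
Splitting the outer sum at $k=n/2$, so that $M_k=\min(k,n-k)$ is $k$ or $n-k$, expresses $V_n(0,0)$ as $1$ minus six single or double sums.

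The next step is to recognise each of those six sums. After pulling out the appropriate binomial normalising constant, the two ``$\binom nM$'' pieces are — using the exact identities $q_nu_n^{-1}+(1-q_n)u_n=e^{-rT/n}$ and $q_nu_n^{-1}e^{rT/n}=p_n$ — exactly $P(\mathrm{Bin}(n,q_n)\le n/2)$ and $e^{-rT}P(\mathrm{Bin}(n,p_n)>n/2)$. For the four ``$(1-u_n)\sum_{i<M_k}$'' pieces I would exchange the order of summation once more and sum the resulting geometric series in $k$; with $\theta_n=q_n/(u_n(1-q_n))$ this produces the prefactor $(1-u_n)/(1-\theta_n)$, whose limit $\sigma^2/(2r)$ is exactly where the hypothesis $r\neq0$ is used, multiplied by binomial CDFs with success probabilities $q_n$, $1-p_n$ and $u_n/(1+u_n)$. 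Two algebraic facts make this work: the a~priori divergent products $(1-q_n)^n(1+u_n\theta_n)^n$ and $(q_nu_n^{-1})^n(1+u_n\theta_n^{-1})^n$ telescope to $1$ and to $e^{-rT}$, so nothing actually blows up; and the two $u_n/(1+u_n)$ contributions — which carry the ``wrong'' Gaussian argument $\sigma\sqrt T/2$ — cancel exactly when $n$ is odd and up to $O(n^{-1/2})$ when $n$ is even.

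I would then expand everything in powers of $n^{-1/2}$: $p_n=\tfrac12+\tfrac{a_2/2}{\sqrt n}+\cdots$, $q_n=\tfrac12+\tfrac{a_1/2}{\sqrt n}+\cdots$, $u_n/(1+u_n)=\tfrac12+\tfrac{\sigma\sqrt T/4}{\sqrt n}+\cdots$, together with the expansions of $(1-u_n)/(1-\theta_n)$, of $\theta_n$, and of the telescoped $n$-th powers; the thresholds are all $\lfloor n/2\rfloor$ or $\lceil n/2\rceil$, which fit the hypotheses of Lemma~\ref{lemme-nouveau} with $a=c=d=0$ and $b_n$ bounded but parity-dependent. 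Applying Lemma~\ref{lemme-nouveau} in the specialised form recorded just after Lemma~\ref{lemme-intermediaire} (so $A=2\alpha$, hence $A=a_1$ for the $q_n$-sums and $A=a_2$ for the $p_n$-sums), multiplying by the expanded prefactors and collecting by order, the constant term should assemble into
\[
\Bigl(1+\tfrac{\sigma^2}{2r}\Bigr)\Phi(a_1)-e^{-rT}\Bigl(1-\tfrac{\sigma^2}{2r}\Bigr)\Phi(a_2)-\tfrac{\sigma^2}{2r}=\frac{C^{fl}_{BS}}{S_0},
\]
the $\sigma\sqrt T/2$-terms cancelling; this already yields the convergence $C^{fl}_n\to C^{fl}_{BS}$. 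Reading off the $n^{-1/2}$ and $n^{-1}$ coefficients and simplifying with the Black--Scholes relations $a_1-a_2=\sigma\sqrt T$ and $e^{-rT}e^{-a_2^2/2}=e^{-a_1^2/2}$ should then produce the two displayed coefficients.

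The step I expect to be the main obstacle is this final bookkeeping. There are many independent sources of $n^{-1/2}$ and $n^{-1}$ terms — the coefficients $\alpha,\beta,\gamma,\delta$ of each binomial parameter feeding into Lemma~\ref{lemme-nouveau}, the bounded parity term $b_n$ in every threshold, the sub-leading terms of $(1-u_n)/(1-\theta_n)$ and of the telescoped powers, and the imperfect even-$n$ cancellation of the $u_n/(1+u_n)$ pieces — and one must show that after combination they collapse onto the two \emph{constants} in the statement; in particular the parity-dependent pieces have to cancel, which is exactly the ``no zigzag at $t=0$'' phenomenon mentioned before the theorem. A secondary delicate point is the uniformity of the $O(\cdot)$ error terms through the geometric summations and through the factors of $\theta_n^{\,n/2}$ type, which individually grow like $e^{c\sqrt n}$ and only become harmless once the exact telescoping identities have been invoked.
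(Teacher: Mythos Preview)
Your reduction is correct and would lead to the result, but it takes a more tortuous route than the paper's. The paper does \emph{not} separate the constant~$1$ from the $u_n^{-j}$ part; instead it keeps the factor $(1-u_n^{-j})$ intact, substitutes $m=k-j$ in the inner sum to extract a factor $Q_n^j$ with $Q_n=q_n/(1-q_n)$, interchanges so that $m$ is the outer index, and then sums the inner geometric series in~$j$ with ratios $Q_n$ and $Q_n d_n$ (your $\theta_n$ is exactly $Q_n d_n$, so the hypothesis $r\neq0$ enters at the same place). This gives the closed form
\[
V_n(0,0)=\frac{Q_n(1-d_n)}{(1-Q_n)(1-Q_n d_n)}\,\phi_1-\frac{1}{1-Q_n}\,\phi_2+\frac{e^{-rT}}{1-Q_n d_n}\,\phi_3,
\]
where each $\phi_i$ is a difference of two binomial CDFs with a \emph{single} success probability ($q_n$, $1-q_n$, $1-p_n$ respectively); the extra parameter $u_n/(1+u_n)$ never appears. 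Six applications of Lemma~\ref{lemme-nouveau} and straightforward simplification then finish the proof. By contrast, your decomposition produces the two $u_n/(1+u_n)$ pieces which, as you correctly note, cancel only up to $O(n^{-1/2})$ for even~$n$; since this residual is of the same order as the leading correction you are computing, you must expand it to two further orders and check that its parity-dependent part cancels against the $b_n$-contributions from the other CDFs. This does work, but it is precisely the bookkeeping that the paper's organisation avoids. If you want to shorten your calculation, drop the initial splitting $1-\sum u_n^{-j}P(j)$ and instead factor $Q_n^j$ out of the shifted inner sum; the geometric series in~$j$ then does all the work and the spurious parameter disappears.
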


\begin{proof}
  By Theorem~\ref{CV-formulas}, we need to approximate the value
\begin{equation}\label{v00}
V_n(0,0)=\sum_{j=0}^{n}(1-u_n^{-j})\sum_{k=j}^{l_1} 
\Lambda_{j,k,n}\,q_n^k (1-q_n)^{n-k}
\end{equation}
deduced from~(\ref{def-V}) and~(\ref{densite-V}) where $l_1=\lfloor
(n+j)/2\rfloor$ and $\Lambda_{j,k,n}=\binom{n}{k-j}-\binom{n}{k-j-1}$
if $k>j$ and 1 if $k=j$. First, we rewrite the inner sum in two sums
starting at 0 and get
\begin{align*}
&\sum_{k=j}^{l_1} \Lambda_{j,k,n}\,q_n^k (1-q_n)^{n-k}\\
&=\sum_{k=j}^{l_1} \binom{n}{k-j} q_n^k (1-q_n)^{n-k}
-\sum_{k=j+1}^{l_1} \binom{n}{k-j-1} q_n^k (1-q_n)^{n-k}\\
&=Q_n^j\sum_{m=0}^{l_2} \binom{n}{m} q_n^m (1-q_n)^{n-m}
-Q_n^{j+1}\sum_{m=0}^{l_2-1} \binom{n}{m} q_n^m (1-q_n)^{n-m},
\end{align*}
where $Q_n=q_n/(1-q_n)$ and $l_2=\lfloor (n-j)/2\rfloor$.
Thus~$V_n(0,0)=\varphi_1 - \varphi_2$ with
\begin{align*}
\varphi_1 &= \sum_{j=0}^{n}\sum_{m=0}^{l_2}
Q_n^j(1-u_n^{-j})\binom{n}{m} q_n^m (1-q_n)^{n-m},\\
\varphi_2 &= Q_n\sum_{j=0}^{n}\sum_{m=0}^{l_2-1}
Q_n^j(1-u_n^{-j})\binom{n}{m} q_n^m (1-q_n)^{n-m}.
\end{align*}
We interchange the sums, noting that $0\leq j\leq n$ and $0\leq m\leq
\lfloor(n-j)/2\rfloor$ is equivalent to $0\leq m\leq \lfloor
n/2\rfloor$ and $0\leq j\leq n-2m$, and similarly
for~$\varphi_2$. Thus
\begin{align*}
\varphi_1 &= \sum_{m=0}^{\lfloor n/2\rfloor}
\binom{n}{m} q_n^m (1-q_n)^{n-m}
\sum_{j=0}^{n-2m}Q_n^j(1-u_n^{-j}),\\
\varphi_2 &= Q_n\sum_{m=0}^{\lfloor n/2\rfloor-1}
\binom{n}{m} q_n^m (1-q_n)^{n-m}
\sum_{j=0}^{n-2-2m}Q_n^j(1-u_n^{-j}).
\end{align*}
Using the formula for geometric series and the link between $u_n$ and
$d_n$, we may simplify the inner sums as
\begin{equation}\label{decomposition_sommes}
\begin{split}
&\sum_{j=0}^{n-2m}Q_n^j-\sum_{j=0}^{n-2m}(Q_n d_n)^j\\
&=\frac{Q_n}{\Theta_n}((1-d_n)+(Q_n d_n-1)Q_n^{n-2m}
+d_n(1-Q_n)(Q_n d_n)^{n-2m})
\end{split}
\end{equation}
and
\begin{align*}
&\sum_{j=0}^{n-2-2m}Q_n^j-\sum_{j=0}^{n-2-2m}(Q_n d_n)^j\\
&=\frac{Q_n}{\Theta_n}((1-d_n)+Q_n^{-2}(Q_n d_n-1)Q_n^{n-2m}
+Q_n^{-2}d_n^{-1}(1-Q_n)(Q_n d_n)^{n-2m})
\end{align*}
where $\Theta_n=(1-Q_n)(1-Q_n d_n)$. Note that the two ratios~$Q_n$
and~$Q_n d_n$ are different from~1 by our assumptions. In fact,
$Q_n=1$ is equivalent to $\cosh(\sigma\sqrt{T/n})=e^{-rT/n}$ which is
never the case with~$\sigma$ strictly positive; and $Q_n d_n=1$ is
equivalent to $e^{-rT/n}=1$ which is the case only if $r=0$.
Substituting these relations into $\varphi_1$ and $\varphi_2$,
\begin{align*}
\varphi_1 &= \frac{Q_n}{\Theta_n}(1-d_n)\sum_{m=0}^{\lfloor n/2\rfloor}
\binom{n}{m} q_n^m (1-q_n)^{n-m}\\
&\quad +\frac{Q_n}{\Theta_n}(Q_n d_n-1)
\sum_{m=0}^{\lfloor n/2\rfloor}
\binom{n}{m} q_n^m (1-q_n)^{n-m}Q_n^{n-2m}\\
&\quad +\frac{Q_n d_n}{\Theta_n}(1-Q_n)
\sum_{m=0}^{\lfloor n/2\rfloor}
\binom{n}{m} q_n^m (1-q_n)^{n-m}(Q_n d_n)^{n-2m},
\end{align*}
and
\begin{align*}
\varphi_2 &= \frac{Q_n^2}{\Theta_n}(1-d_n)\sum_{m=0}^{\lfloor n/2\rfloor-1}
\binom{n}{m} q_n^m (1-q_n)^{n-m}\\
&\quad +\frac{1}{\Theta_n}(Q_n d_n-1)
\sum_{m=0}^{\lfloor n/2\rfloor-1}
\binom{n}{m} q_n^m (1-q_n)^{n-m}Q_n^{n-2m}\\
&\quad +\frac{1}{\Theta_n d_n}(1-Q_n)
\sum_{m=0}^{\lfloor n/2\rfloor-1}
\binom{n}{m} q_n^m (1-q_n)^{n-m}(Q_n d_n)^{n-2m}.
\end{align*}
Now, by definition of~$Q_n$,
\begin{equation*}
\sum\binom{n}{m} q_n^m (1-q_n)^{n-m}Q_n^{n-2m}
=\sum\binom{n}{m} (1-q_n)^m q_n^{n-m};
\end{equation*}
moreover, since $(1-q_n)Q_n d_n=p_n e^{-rT/n}$ and
$q_n/(Q_n d_n)=(1-p_n)e^{-rT/n}$ we get that
\begin{equation*}
\begin{split}
\sum\!\binom{n}{m} q_n^m (1-q_n)^{n-m}(Q_n d_n)^{n-2m}
&\!=\!\sum\!\binom{n}{m}\!\Big(\frac{q_n}{Q_n d_n}\Big)^m
\!((1-q_n)Q_n d_n)^{n-m}\\
&\!=\!e^{-rT}\sum\!\binom{n}{m} (1-p_n)^m p_n^{n-m}.
\end{split}
\end{equation*}
The value~(\ref{def-V}) therefore becomes
\begin{equation}\label{simp-V}
V_n(0,0)=\frac{Q_n}{\Theta_n}(1-d_n)\phi_1-\frac{1}{1-Q_n}\phi_2
+\frac{e^{-rT}}{1-Q_n d_n}\phi_3,
\end{equation}
where 
\begin{itemize}
\setlength{\itemsep}{1mm}
\item[$\phi_1=$\!] $\mathcal{B}_{n,q_n}(\lfloor n/2\rfloor)
-Q_n\,\mathcal{B}_{n,q_n}(\lfloor n/2\rfloor-1)$,
\item[$\phi_2=$\!] $Q_n\,\mathcal{B}_{n,1-q_n}(\lfloor n/2\rfloor)
-\mathcal{B}_{n,1-q_n}(\lfloor n/2\rfloor-1)$,
\item[$\phi_3=$\!] $Q_n d_n\,\mathcal{B}_{n,1-p_n}(\lfloor n/2\rfloor)
-u_n\,\mathcal{B}_{n,1-p_n}(\lfloor n/2\rfloor-1)$,
\end{itemize}
with $\mathcal{B}_{n,p}$ the cumulative distribution function of the
binomial distribution with parameters $n$ and $p$. In the following we
will need the asymptotics of $p_n$, $q_n$ and~$Q_n$ given by
\[
p_n=\frac{e^{rT/n}-e^{-\sigma\sqrt{T/n}}}
{e^{\sigma\sqrt{T/n}}-e^{-\sigma\sqrt{T/n}}},
\quad
q_n=p_n e^{\sigma\sqrt{T/n}} e^{-rT/n},
\]
and~$Q_n=q_n/(1-q_n)$. They are
\begin{equation}\label{da-pn}
p_n=\frac{1}{2}
	+\frac{\alpha_2}{\sqrt{n}}
	+\frac{\gamma_2}{n^{3/2}}
	+O\Big(\frac{1}{n^{5/2}}\Big),
\end{equation}
where
\begin{equation*}
\alpha_2=\frac{2r-\sigma^2}{4\sigma}\sqrt{T},\quad 
\gamma_2=\frac{12r^2-4r\sigma^2+\sigma^4}{48\sigma}\,T^{3/2},
\end{equation*}
\begin{equation}\label{da-qn}
q_n=\frac{1}{2}
	+\frac{\alpha_1}{\sqrt{n}}
	+\frac{\gamma_1}{n^{3/2}}
	+O\Big(\frac{1}{n^{5/2}}\Big),
\end{equation}
where
\begin{equation*}
\alpha_1=\frac{2r+\sigma^2}{4\sigma}\sqrt{T},\quad 
\gamma_1=-\frac{12r^2+4r\sigma^2+\sigma^4}{48\sigma}\,T^{3/2}.
\end{equation*}
Here we have chosen the indices of~$\alpha_1$ and~$\alpha_2$ in order
to be consistent with the literature. We deduce from this that
\begin{equation}\label{da-theta1}
Q_n = 1+\frac{4\alpha_1}{\sqrt{n}}
	+\frac{8\alpha^2_1}{n}\\
	+\frac{16\alpha_1^3+4\gamma_1}{n^{3/2}}
	+O\Big(\frac{1}{n^2}\Big).
\end{equation}
We now expand each of the terms of~$V_n(0,0)$ in~(\ref{simp-V}). We
begin with the three coefficients.  We obtain by~(\ref{da-theta1}) and
a Taylor expansion for~$d_n$ about~$0$ that
\begin{equation}\label{terme2}
\frac{1}{1-Q_n}=-\frac{1}{4\alpha_1}\sqrt{n}+\frac{1}{2}
+\frac{\gamma_1}{4\alpha^2_1}\frac{1}{\sqrt{n}}+O\Big(\frac{1}{n}\Big),
\end{equation}
\begin{equation}\label{terme3}
\frac{1}{1-Q_n d_n}=-\frac{\sigma}{2r\sqrt{T}}\sqrt{n}+\frac{1}{2}
+\frac{\sigma^3\sqrt{T}}{24r}\frac{1}{\sqrt{n}}+O\Big(\frac{1}{n}\Big).
\end{equation}
From~(\ref{terme2}) and~(\ref{terme3}) we obtain the third coefficient
as
\begin{equation}\label{terme1}
\begin{split}
\frac{Q_n}{\Theta_n}(1-d_n)
&=\frac{1}{1-Q_n}-\frac{1}{1-Q_n d_n}\\
&=\frac{\sigma^2}{8r\alpha_1}\sqrt{n}
-\Big(\frac{\alpha_1\sigma^2}{6r}
+\frac{r^2 T^{3/2}}{24\alpha_1^2\sigma}\Big)
\frac{1}{\sqrt{n}}+O\Bigl(\frac{1}{n}\Bigr).
\end{split}
\end{equation}
Lemma~\ref{lemme-nouveau} gives us the asymptotics for the
complementary cumulative distribution function, that is, for
$1-\mathcal{B}_{n,p_n}(j_n-1)$. We use this link to obtain the
expansions for~$\phi_1$, $\phi_2$ and~$\phi_3$.  We apply the lemma
with $j_n=\lfloor n/2\rfloor +1=n/2+1/2+b_n$, where $b_n=1/2$ if
$n$~is even and 0 else, and $j_n=\lfloor n/2\rfloor=n/2+1/2+b_n^*$,
where $b_n^*=b_n-1$. We obtain after a long series of simple
calculations and simplifications
\begin{equation}
\begin{split}
\phi_1&=\Big(2\frac{e^{-2\alpha^2_1}}{\sqrt{2\pi}}
+4\alpha_1(\Phi(2\alpha_1)-1)\Big)\frac{1}{\sqrt{n}}\\
&\quad+\Big(4\alpha_1\frac{e^{-2\alpha^2_1}}{\sqrt{2\pi}}
+8\alpha_1^2(\Phi(2\alpha_1)-1)\Big)\frac{1}{n}\\
&\quad +\Big((2b_n-\frac{3}{2}+6\alpha^2_1)\frac{e^{-2\alpha^2_1}}{\sqrt{2\pi}}
+(16\alpha_1^3+4\gamma_1)(\Phi(2\alpha_1)-1\Big)
\frac{1}{n^{3/2}}\\
&\quad +O\Big(\frac{1}{n^2}\Big),
\end{split}
\end{equation}
\begin{equation}
\begin{split}
  \phi_2&=\Big(2\frac{e^{-2\alpha^2_1}}
  {\sqrt{2\pi}}+4\alpha_1\Phi(2\alpha_1)\Big)\frac{1}{\sqrt{n}}
  +\Big(4\alpha_1\frac{e^{-2\alpha^2_1}}
  {\sqrt{2\pi}}+8\alpha_1^2\Phi(2\alpha_1)\Big)\frac{1}{n}\\
  &\quad
  +\Big((2b_n-\frac{3}{2}+6\alpha^2_1)\frac{e^{-2\alpha^2_1}}{\sqrt{2\pi}}
  +(16\alpha_1^3+4\gamma_1)\Phi(2\alpha_1)\Big)\frac{1}{n^{3/2}}\\
  &\quad +O\Big(\frac{1}{n^2}\Big),
\end{split}
\end{equation}
and
\begin{equation}\label{da-phi3}
\begin{split}
\phi_3&=\Big(2\frac{e^{-2\alpha^2_2}}
{\sqrt{2\pi}}+4\alpha_2\Phi(2\alpha_2)\Big)\frac{1}{\sqrt{n}}
+\Big(4\alpha_1\frac{e^{-2\alpha^2_2}}
{\sqrt{2\pi}}+8\alpha_1\alpha_2\Phi(2\alpha_2)\Big)\frac{1}{n}\\
&\quad +\Big((2b_n-\frac{3}{2}
+4\alpha^2_1+2\alpha^2_2)\frac{e^{-2\alpha^2_2}}{\sqrt{2\pi}}\\
&\qquad +\Big(\frac{2r^3}{\sigma^3}+
\frac{\sigma r}{6}-\frac{\sigma^3}{6}\Big)T^{3/2}
\Phi(2\alpha_2)\Big)\frac{1}{n^{3/2}}
+O\Big(\frac{1}{n^2}\Big).
\end{split}
\end{equation}
The conclusion can be obtained after several simplications with
(\ref{simp-V}), (\ref{terme2})-(\ref{da-phi3}) and using that
$e^{-2\alpha_2^2}e^{-rT}=e^{-2\alpha_1^2}$. This result is
\begin{align*}
V_n(0,0)&=\Bigl(1+\frac{\sigma^2}{2r}\Bigr)\Phi(a_1)
-e^{-rT}\Bigl(1-\frac{\sigma^2}{2r}\Bigr)\Phi(a_2)
-\frac{\sigma^2}{2r}\\
&\quad + \frac{\sigma\sqrt{T}}{2}
\Bigl[\Bigl(1+\frac{\sigma^2}{2r}\Bigr)\Phi(a_1)
-e^{-rT}\Bigl(1-\frac{\sigma^2}{2r}\Bigr)\Phi(a_2)
-\Bigl(1+\frac{\sigma^2}{2r}\Bigr)\Bigr]\frac{1}{\sqrt{n}}\\
&\quad + \Bigg[\frac{\sigma^2 T}{12}
\Bigl[\Bigl(3+\frac{\sigma^2}{2r}\Bigr)\Phi(a_1)
-e^{-rT}\Bigl(3-\frac{\sigma^2}{2r}\Bigr)\Phi(a_2)
-\Bigl(3+\frac{\sigma^2}{2r}\Bigr)\Bigr]\\
&\qquad +\frac{\sigma\sqrt{T}}{2}\frac{e^{-a_1^2/2}}{\sqrt{2\pi}}\Bigg]\frac{1}{n}
+O\Bigl(\frac{1}{n^{3/2}}\Bigr)
\end{align*}
with $a_1=2\alpha_1$ and $a_2=2\alpha_2$. So the value of the call can
be deduced with $C^{fl}_n=S_0 V_n(0,0)$, and the theorem follows
with~($\ref{formule-BS-call})$.
\end{proof}

We can give a similar result for the European lookback put option with
floating strike.
\begin{theoreme}\label{thm-put}
  In the n-period CRR binomial model, let~$S_0$ be the initial stock
  price, $r$~the risk free interest rate, $\sigma$~the volatility and
  $T$~the time to maturity. If $r\neq 0$ then the asymptotic formula
  for the price of the European lookback put option with floating
  strike is
\[
\begin{split}
P_{n}^{fl}&=P^{fl}_{BS}-\frac{\sigma\sqrt{T}}{2}
(P^{fl}_{BS}+S_0)\frac{1}{\sqrt{n}}\\
&\quad +\Big[\frac{\sigma^2 T}{12}(P_{BS}^{fl}
\!+\!2S_0 \Big[\Phi(a_1)\!-\!e^{-rt}(\Phi(a_2)\!-\!1)\!+\!\frac{1}{2}\Big])
\!+\!S_0\frac{\sigma\sqrt{T}}{2}\frac{e^{-a_1^2/2}}{\sqrt{2\pi}}\Big]
\frac{1}{n}\\
&\quad +O\Bigl(\frac{1}{n^{3/2}}\Bigr),
\end{split}
\]
where $P^{fl}_{BS}$ is the price given by the Black-Scholes model,
$\Phi$ the standard normal cumulative distribution function,
$a_1=(r/\sigma+\sigma/2)\sqrt{T}$ and
$a_2=(r/\sigma-\sigma/2)\sqrt{T}$.
\end{theoreme}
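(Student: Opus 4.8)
The plan is to repeat the argument of Theorem~\ref{thm-call} with the put formula~(\ref{def-W})--(\ref{densite-W}) in place of~(\ref{def-V})--(\ref{densite-V}); a shorter route through a symmetry is noted at the end. By Theorem~\ref{CV-formulas} one has to expand
\[
W_n(0,0)=\sum_{j=0}^{n}(u_n^{j}-1)\sum_{k=j}^{l_1}\Lambda_{j,k,n}\,(1-q_n)^k q_n^{n-k},\qquad l_1=\lfloor (n+j)/2\rfloor .
\]
Compared with~(\ref{v00}), the only changes are that each weight $q_n^k(1-q_n)^{n-k}$ becomes $(1-q_n)^k q_n^{n-k}$ — so the ratio $Q_n=q_n/(1-q_n)$ is replaced everywhere by $R_n:=(1-q_n)/q_n=1/Q_n$ — and the gain factor $1-u_n^{-j}$ becomes $u_n^{j}-1$.

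First I would carry out the algebraic part verbatim: split $\Lambda_{j,k,n}$ into two binomial coefficients, shift the inner index, interchange the summations (the range equivalence is the same), and sum the geometric series, which now involve the ratios $R_n$ and $R_nu_n$. One must check non-degeneracy: $R_n=1$ is equivalent to $Q_n=1$ and is excluded just as for the call, while $R_nu_n=1$ reduces, by the same computation using $u_nd_n=1$, to $e^{rT/n}=1$, i.e.\ $r=0$, excluded by hypothesis. This yields, in analogy with~(\ref{decomposition_sommes}) and~(\ref{simp-V}), $W_n(0,0)$ as a combination of three partial binomial tails with explicit prefactors; the tail carrying $(R_nu_n)^{n-2m}$ is turned into $e^{-rT}$ times a tail with parameter $1-p_n$ by means of the identities $q_nd_n=p_ne^{-rT/n}$ and $(1-q_n)u_n=(1-p_n)e^{-rT/n}$ (the latter being the relation $q_n/(Q_nd_n)=(1-p_n)e^{-rT/n}$ of the call proof, rewritten), while the other two tails have parameters $q_n$ and $1-q_n$.

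Next I would expand asymptotically. The prefactors come from~(\ref{da-theta1})--(\ref{terme3}) together with $R_n=1/Q_n$ and $R_nu_n=u_n^2/(Q_nd_n)$, and the three tails from Lemma~\ref{lemme-nouveau}, applied — exactly as for the call — with $j_n$ of the form $n/2+1/2+b_n$ ($b_n$ bounded) and probability parameter successively $q_n$, $1-q_n$, $1-p_n$ (i.e.\ $\alpha$ equal to $\alpha_1$, $-\alpha_1$, $-\alpha_2$). Collecting the coefficients of $n^{-1/2}$ and $n^{-1}$ and simplifying with $e^{-2\alpha_2^2}e^{-rT}=e^{-2\alpha_1^2}$ and $a_i=2\alpha_i$ gives $W_n(0,0)$; then $P^{fl}_n=S_0W_n(0,0)$ and the Black--Scholes formulas give the stated expansion. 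As for the call, the main obstacle is the length of this bookkeeping — chiefly assembling the $n^{-1}$ coefficient from the three terms and tracking signs through the geometric-series step; there is no new conceptual difficulty.

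Finally, a clean cross-check — and the reason the two theorems look so alike — is the exact identity: with all other parameters fixed, $W_n(0,0)$ computed with volatility $\sigma$ equals $-V_n(0,0)$ computed with volatility $-\sigma$. Indeed, interchanging $u_n$ and $d_n$ (that is, replacing $\sigma$ by $-\sigma$) sends $p_n$ to $1-p_n$, hence $q_n$ to $1-q_n$ and $u_n^{-j}$ to $u_n^{j}$, and turns~(\ref{def-V})--(\ref{densite-V}) into $-1$ times~(\ref{def-W})--(\ref{densite-W}). Since the derivation of Theorem~\ref{thm-call} uses $\sigma>0$ only through the non-vanishing of $1-Q_n$, which persists for $-\sigma$, the put expansion is obtained by substituting $\sigma\mapsto-\sigma$ in the call expansion — whence $a_i\mapsto-a_i$, $\Phi(a_i)\mapsto1-\Phi(a_i)$, $\sigma\sqrt T\mapsto-\sigma\sqrt T$, while $\sigma^2$ is unchanged — and negating; this reproduces the stated formula and, at leading order, recovers $P^{fl}_{BS}=C^{fl}_{BS}-S_0(1-e^{-rT})(1-\sigma^2/(2r))$.
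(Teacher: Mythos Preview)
Your main line --- swap $q_n\leftrightarrow 1-q_n$ (equivalently work with $R_n=Q_n^{-1}$) and $1-u_n^{-j}\leftrightarrow u_n^j-1$, then rerun the call proof --- is exactly what the paper does; its decomposition has prefactors written in terms of $Q_n$ (with $\Theta_n^*=(Q_n-u_n)(Q_n-1)$ and $1/(u_n-Q_n)$) and tails $\phi_4,\phi_5,\phi_6$ built from $\mathcal{B}_{n,1-q_n}$, $\mathcal{B}_{n,q_n}$, $\mathcal{B}_{n,p_n}$, which are the same three binomial families you reach. Two small slips to fix: your own identities give
\[
(1-q_n)^m q_n^{\,n-m}(R_nu_n)^{n-2m}=((1-q_n)u_n)^{n-m}(q_nd_n)^m=e^{-rT}p_n^{\,m}(1-p_n)^{n-m},
\]
so the third tail has parameter $p_n$ (hence $\alpha=\alpha_2$), not $1-p_n$; and $R_nu_n=u_n/Q_n=1/(Q_nd_n)$, not $u_n^2/(Q_nd_n)$. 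Neither affects the structure of the argument.

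Your final paragraph, the symmetry $W_n(0,0;\sigma)=-V_n(0,0;-\sigma)$, is correct and is a route the paper does not take: it bypasses the repetition of the algebra entirely and reads the put expansion off Theorem~\ref{thm-call} by the substitution $\sigma\mapsto-\sigma$. The paper's redo is more self-contained (no need to argue that the call derivation --- in particular the expansions~(\ref{da-pn})--(\ref{terme1}) and the non-degeneracy $Q_n\neq1$, $Q_nd_n\neq1$ --- remains valid for negative~$\sigma$), whereas your symmetry is shorter and explains structurally why the put coefficients arise from the call coefficients via $a_i\mapsto-a_i$, $\sigma\sqrt{T}\mapsto-\sigma\sqrt{T}$ with $\sigma^2$ unchanged.
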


\begin{proof}[Sketch of proof]
  As the approach is quite similar to the previous result, we give
  only the important intermediate steps.  By
  Theorem~\ref{CV-formulas}, the put price~$P^{fl}_n$ is~$S_0
  W_n(0,0)$ with~$W_n(0,0)$ defined in~(\ref{def-W}).  Rewriting the
  double sum as before, we obtain that
\begin{equation}
W_n(0,0)=\frac{1}{\Theta_n^*}(u_n-1)\phi_4-\frac{1}{1-Q_n}\phi_5
+\frac{e^{-rT}}{u_n-Q_n}\phi_6,
\end{equation}
where $\Theta_n^*=(Q_n-u_n)(Q_n-1)$,
$\phi_4=Q_n\,\mathcal{B}_{n,1-q_n}(\lfloor n/2\rfloor)
-\mathcal{B}_{n,1-q_n}(\lfloor n/2\rfloor-1)$,
$\phi_5=\mathcal{B}_{n,q_n}(\lfloor n/2\rfloor)
-Q_n\,\mathcal{B}_{n,q_n}(\lfloor n/2\rfloor-1)$ and
$\phi_6=u_n\,\mathcal{B}_{n,p_n}(\lfloor n/2\rfloor) -Q_n
d_n\,\mathcal{B}_{n,p_n}(\lfloor n/2\rfloor-1)$, with
$\mathcal{B}_{n,p}$ the cumulative distribution function of the
binomial distribution with parameters $n$ and $p$. To write this
expression as desired it remains to apply Lemma~$\ref{lemme-nouveau}$
six times and simplify.
\end{proof}

\begin{rmq}\label{remark_r0}
  In both results, we have obtained the asymptotic behavior for a non
  zero risk free interest rate. But we can show from
  Lemma~\ref{lemme-nouveau} that the asymptotic expansion in powers
  of~$n^{-1/2}$ to the order~$1$ is exactly the same for the
  case~$r=0$, that is to say
\begin{equation}\label{form_call_r0}
C_{n}^{fl}=C^{fl}_{BS}+\frac{\sigma\sqrt{T}}{2}
(C^{fl}_{BS}-S_0)\frac{1}{\sqrt{n}}+O\Big(\frac{1}{n}\Big),
\end{equation}
where $C^{fl}_{BS}$ is the price given by the Black-Scholes model
for~$r=0$, that is,
\[
C^{fl}_{BS}=S_0\frac{\sigma\sqrt{T}}{\sqrt{2\pi}}e^{-\sigma^2 T/8}
+S_0\Phi\Big(\frac{\sigma\sqrt{T}}{2}\Big)
-S_0\Phi\Big(\frac{-\sigma\sqrt{T}}{2}\Big)
\Big(1+\frac{\sigma^2 T}{2}\Big).
\]
The latter was obtained by Babbs~\cite{babbs} from the price for
$r\neq 0$ by passing to the limit. While, for $r=0$, $Q_n$~remains
different from~$1$, we get that $Q_n d_n=1$ so that the value of the
second geometric series in~(\ref{decomposition_sommes}) is $n-2m+1$.
Thus the writing of~$V_n(0,0)$ will be different and its coefficients
are of a higher order than those obtained in~(\ref{simp-V}). As a
result, similar calculations as before lead to~(\ref{form_call_r0}),
but we do not get the term in~$n^{-1}$. In order to calculate that
term, we would have to improve Lemma~\ref{lemme-nouveau}: we would
need an asymptotic expansion of the sum in powers of~$n^{-1/2}$ to the
order~4.

In the same way the value of the put in the case $r=0$ can be
approximated by
\[
P_{n}^{fl}=P^{fl}_{BS}-\frac{\sigma\sqrt{T}}{2}
(P^{fl}_{BS}+S_0)\frac{1}{\sqrt{n}}+O\Big(\frac{1}{n}\Big),
\]
where $P^{fl}_{BS}$ is the price given by the Black-Scholes model
for~$r=0$, that is,
\[
P^{fl}_{BS}=C^{fl}_{BS}+\frac{S_0\sigma^2 T}{2};
\]
the latter can be deduced as for the call.
\end{rmq}
\section{Convergence of the delta}\label{section-delta}
An important way to control the risks related to exotic options is
given by the Greek values that give information on price variation as
one varies the parameters. One of the basic sensitivities is the first
order Greek delta. This is the first partial derivative of the option
value with respect to the value of the underlying asset.  The delta
has been studied for lookback options by Pedersen~\cite{pedersen},
Bernis et al.~\cite{bernis} and Gobet and
Kohatsu-Higa~\cite{gobet_kohatsu} but without using asymptotic
expansions; this we will do in this section. From the definition of
the delta, we quickly get its value for the call from the
Goldman-Sosin-Gatto~\cite{goldman_sosin} formula:

\begin{equation}
\Delta_{BS}^{fl}=\Big(1+\frac{\sigma^2}{2r}\Big)\Phi(a_1)
	-e^{-rT}\Big(1-\frac{\sigma^2}{2r}\Big)\Phi(a_2)
	-\frac{\sigma^2}{2r}.
\end{equation}
In practice, in order to provide an estimate of the delta from the
Cox-Ross-Rubinstein model, we use
\begin{equation}\label{def-delta-gen}
\Delta_n^{fl}=\frac{C^{fl}_n(1,1)-C^{fl}_n(0,1)}{S_0 u_n-S_0 d_n},
\end{equation}
where $C^{fl}_n(1,1)$ is the price of the call associated with the
node~(1,1) in the $n$th tree and $C^{fl}_n(0,1)$ the price of the call
associated with the node~(0,1) in the same tree, as was described in
Hull~\cite{hull_white}. Applying the Cheuk-Vorst lattice, we can
deduce that this is equivalent to
\begin{equation}\label{def-delta}
\Delta_n^{fl}=\frac{u_n V_n(1,1)-d_n V_n(0,1)}{u_n-d_n}.
\end{equation}
This approximation for delta is calculated at time~$T/n$ but it is
used as an estimate for time zero.

\begin{theoreme}\label{thm-delta}
  In the n-period CRR binomial model, let~$S_0$ be the initial stock
  price, $r$~the risk free interest rate, $\sigma$~the volatility and
  $T$~the time to maturity. If $r\neq 0$ then the asymptotic formula
  for the delta of the European lookback call option with floating
  strike is
\[
\begin{split}
\Delta_{n}^{fl}&=\Delta^{fl}_{BS}
	-\Big[\frac{\sigma^2}{2r}a_1\Phi(-a_1)
	-e^{-rT}\Big(1-\frac{\sigma^2}{2r}\Big)a_2\Phi(a_2)
	-\frac{e^{-a_1^2/2}}{\sqrt{2\pi}}\Big]\frac{1}{\sqrt{n}}\\
&\quad +O\Bigl(\frac{1}{n}\Bigr),
\end{split}
\]
where $\Delta^{fl}_{BS}$ is the delta obtained by derivation from the
Black-Scholes formula price, $\Phi$ the standard normal cumulative
distribution function, $a_1=(r/\sigma+\sigma/2)\sqrt{T}$ and
$a_2=(r/\sigma-\sigma/2)\sqrt{T}$.
\end{theoreme}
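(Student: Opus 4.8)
The plan is to mimic the strategy of Theorem~\ref{thm-call}: reduce the quantity~$\Delta_n^{fl}$ in~(\ref{def-delta}) to an explicit combination of coefficients depending on~$u_n$, $d_n$, $Q_n$ (admitting asymptotic expansions) multiplied by binomial cumulative distribution functions, and then apply Lemma~\ref{lemme-nouveau}. First I would work out, from the Cheuk-Vorst backward induction, closed-form double-sum expressions for $V_n(1,1)$ and $V_n(0,1)$ analogous to~(\ref{def-V})--(\ref{densite-V}); here the subtlety is that the starting node is at level~$j=1$ (resp.~$j=0$) at time~$m=1$, so the relevant probabilities are $P(V^*_{1,n}=1,\,V^*_{n,n}=j)$ and $P(V^*_{0,n}=0,\,V^*_{n,n}=j)$ over the remaining $n-1$ periods. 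Because $C^{fl}_n(0,1)$ starts from level~$0$ exactly as $C^{fl}_n(0,0)$ does, $V_n(0,1)$ is essentially $V_{n-1}(0,0)$ with the parameters evaluated at step size~$T/n$ rather than $T/(n-1)$; so its expansion follows from a reparametrised version of the computation already carried out in Theorem~\ref{thm-call}. The genuinely new object is $V_n(1,1)$.

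Next I would repeat, for each of $V_n(1,1)$ and $V_n(0,1)$, the manipulations of Theorem~\ref{thm-call}: split $\Lambda_{j,k,m}$ into two binomial coefficients, interchange the order of summation, sum the resulting geometric series in~$j$ using $d_n=u_n^{-1}$, and recognise the leftover sums over~$m$ as binomial cdf's $\mathcal{B}_{n-1,q_n}$, $\mathcal{B}_{n-1,1-q_n}$, $\mathcal{B}_{n-1,1-p_n}$ evaluated near $\lfloor (n-1)/2\rfloor$, exactly as the $\phi_1,\phi_2,\phi_3$ of that proof. One then has $u_nV_n(1,1)-d_nV_n(0,1)$ as a finite linear combination of such cdf's with coefficients that are rational functions of $u_n$, $Q_n$, $Q_n d_n$; I would expand those coefficients in powers of $n^{-1/2}$ using~(\ref{da-pn})--(\ref{da-theta1}) and the analogues of~(\ref{terme2})--(\ref{terme1}), feed the cdf's through Lemma~\ref{lemme-nouveau} (with $j_n\approx n/2$, so $A=2\alpha_1$ or $2\alpha_2$, $B_n=-2b_n$ as in the remark following Lemma~\ref{lemme-intermediaire}, now with $n$ replaced by $n-1$, which only affects lower-order terms), and finally divide by $u_n-d_n=2\sigma\sqrt{T/n}+O(n^{-3/2})$. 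Since $u_n-d_n$ is of order $n^{-1/2}$, the division raises everything by half a power of~$n$: the $O(n^{-2})$ error in the numerator becomes $O(n^{-3/2})$, the $O(1)$ term of the numerator must cancel (it will, because $V_n(1,1)-V_n(0,1)\to 0$), the $n^{-1/2}$-coefficients of the numerator produce the Black-Scholes delta $\Delta^{fl}_{BS}$, and the $n^{-1}$-coefficients produce the stated $n^{-1/2}$-correction; terms beyond $n^{-1}$ in the numerator are not needed, which is why the theorem stops at order $n^{-1/2}$.

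The main obstacle is bookkeeping rather than conceptual: one must carry the expansion of the numerator $u_nV_n(1,1)-d_nV_n(0,1)$ to \emph{two} orders ($1$ and $n^{-1/2}$ — actually to order $n^{-1}$, since the division by a quantity of order $n^{-1/2}$ shifts orders) while checking that the order-$1$ term vanishes identically, and this requires keeping the $b_n$-dependent pieces and the $\Theta_n$-type coefficients to sufficient accuracy. I expect that, just as in Theorem~\ref{thm-call}, the parity term $b_n$ and the oscillating pieces cancel between the $V_n(1,1)$ and $V_n(0,1)$ contributions (and against the coefficients), leaving a constant limit; verifying this cancellation, together with the identity $e^{-2\alpha_2^2}e^{-rT}=e^{-2\alpha_1^2}$ used to consolidate the exponential terms into the single $e^{-a_1^2/2}/\sqrt{2\pi}$ appearing in the statement, is the delicate part. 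The remaining simplifications — rewriting $\Phi(2\alpha_1)-1=-\Phi(-a_1)$ and collecting the $\frac{\sigma^2}{2r}a_1\Phi(-a_1)$ and $e^{-rT}(1-\frac{\sigma^2}{2r})a_2\Phi(a_2)$ terms — are then routine algebra of the same flavour as the end of the proof of Theorem~\ref{thm-call}.
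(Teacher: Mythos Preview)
Your overall strategy matches the paper's: reduce $V_n(0,1)$ and $V_n(1,1)$ to linear combinations of binomial cdf's with coefficients expandable via~(\ref{da-pn})--(\ref{terme1}), apply Lemma~\ref{lemme-nouveau}, and divide by $u_n-d_n$. Your treatment of $V_n(0,1)$ is exactly what the paper does (equations~(\ref{formulegen-Vn01})--(\ref{simp-V01})), and your discussion of the order-counting after division by $u_n-d_n$ is correct.

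The gap is in your handling of $V_n(1,1)$. You propose to ``work out closed-form double-sum expressions analogous to~(\ref{def-V})--(\ref{densite-V})'' and then ``repeat the manipulations of Theorem~\ref{thm-call}'', but there is no ready-made $\Lambda_{j,k,m}$ for paths starting at level~$1$: the formula in Theorem~\ref{CV-formulas} is specific to the initial node~$(0,0)$. The paper obtains $P(V^*_{1,n}=1,\,V^*_{n,n}=j)$ by splitting the paths from $(1,1)$ into three classes according to whether they change the running minimum or not (equations~(\ref{simp-V11-decomp1}), (\ref{simp-V11-decomp2}), (\ref{simp-V11-decomp3})). The minimum-preserving paths are counted via the \emph{reflection principle} (the $\binom{n-1}{k}-\binom{n-1}{k+2}$ in~(\ref{simp-V11-decomp2})), and the minimum-changing paths are identified, via~(\ref{proba-arbre-change}), with paths in a shifted Cheuk--Vorst sub-lattice, which brings back the familiar $\Lambda_{j,k,n-1}$. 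Only after this decomposition can the geometric-series manipulation of Theorem~\ref{thm-call} be applied, and it now produces \emph{five} $\phi$-terms ($\phi_{10},\dots,\phi_{14}$ in~(\ref{simp-V11})) rather than three, together with the additional prefactors $P_n^{-2}$, $Q_n^{-2}$, $p_n/q_n$ whose expansions~(\ref{da-pnexpm2})--(\ref{da-pnsurqn}) you would also need. None of this is insurmountable, but your plan as written does not contain the combinatorial idea (reflection plus minimum-changing/minimum-preserving split) that makes the computation of $V_n(1,1)$ go through.
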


\begin{proof}
  To determine this approximation from~(\ref{def-delta}), we first
  expand~$V_n(0,1)$. We deduce this in the same way as~$V_n(0,0)$.
  The number of paths to arrive at a level~$j$ from the node~$(0,1)$
  in a tree with $n$~steps is the same as that to arrive at this level
  from the node~$(0,0)$ in a tree with $n-1$~steps. From this, we
  easily deduce that
\begin{equation}\label{formulegen-Vn01}
V_n(0,1)=\sum_{j=0}^{n-1}(1-u_n^{-j})
	\sum_{k=j}^{l_3} \Lambda_{j,k,n-1}\,q_n^k (1-q_n)^{n-1-k},
\end{equation}
where~$l_3=\lfloor(n-1+j)/2\rfloor$, see Theorem~\ref{CV-formulas}.
To obtain the asymptotic expansion of~$V_n(0,1)$, we interchange the
sums as it was done previously. After this, using the formula for the
geometric series and simplifying the different sums, we have
\begin{equation}\label{simp-V01}
V_n(0,1)=\frac{Q_n}{\Theta_n}(1-d_n)\phi_7-\frac{1}{1-Q_n}\phi_8
+\frac{e^{-rT}}{1-Q_n d_n}\phi_9,
\end{equation}
where 
\begin{itemize}
\setlength{\itemsep}{1mm}
\item[$\phi_7=$\!] $\mathcal{B}_{n-1,q_n}(\lfloor (n-1)/2\rfloor)
-Q_n\,\mathcal{B}_{n-1,q_n}(\lfloor (n-1)/2\rfloor-1)$,
\item[$\phi_8=$\!] $Q_n\,\mathcal{B}_{n-1,1-q_n}(\lfloor (n-1)/2\rfloor)
-\mathcal{B}_{n-1,1-q_n}(\lfloor (n-1)/2\rfloor-1)$,
\item[$\phi_9=$\!] $Q_n d_n\,\mathcal{B}_{n-1,1-p_n}(\lfloor (n-1)/2\rfloor)
-u_n\,\mathcal{B}_{n-1,1-p_n}(\lfloor (n-1)/2\rfloor-1)$,
\end{itemize}
with $\mathcal{B}_{n,p}$ the cumulative distribution function of the
binomial distribution with parameters~$n$ and~$p$.

Next we want to expand~$V_n(1,1)$
that is, by definition,
\begin{equation}\label{simp-V11-def}
V_n(1,1)=\sum_{j=0}^{n}(1-u_n^{-j})P(V^*_{1,n}=1, V^*_{n,n}=j),
\end{equation}
see Theorem~\ref{CV-formulas}.  It remains to evaluate the
probabilities for these values of~$j$. The paths leaving from~$(1,1)$
can be divided into three parts. The first part contains all paths
that arrive at level~$n$ or~$n-2$. For these paths we are sure that
there had been no change for the minimal value.  Note that it is
impossible to reach level~$n-1$ leaving from~$(1,1)$. If the path
reaches the level~$n$, it is after exactly $n-1$~upward jumps and no
downward jump. If the path reaches the level~$n-2$, it is after
exactly $n-2$~upward jumps and one downward jump. The number of such
paths is obtained with the binomial coefficient. The link between the
reached level~$j$ and the number of upward jumps~$k$ in this binomial
tree is $j=-n+2+2k$.  Considering the number of upward jumps rather
than the level reached, we have that
\begin{equation}\label{simp-V11-decomp1}
\begin{split}
&\sum_{j=n-2}^{n}(1-u_n^{-j})P(V^*_{1,n}=1, V^*_{n,n}=j)\\
&\qquad=\sum_{k=n-2}^{n-1}(1-u_n^{n-2-2k})\binom{n-1}{k}q_n^k (1-q_n)^{n-1-k}.
\end{split}
\end{equation}

The second part contains all other paths reaching a final node without
changing the minimal value. Such paths reach final nodes at even level
if $n$~is even and at odd level if $n$~is odd apart from the
levels~$n$ and~$n-2$ considered in the first case. These levels are
achieved with a number of upward jumps between $\lfloor(n-1)/2\rfloor$
for the lowest level and~$n-3$ for the highest.  For a fixed level,
the number of paths is obtained from the binomial coefficient from
which we subtract the number of paths changing the minimum value. This
last number is obtained by using the reflection principle (identical
to the reasoning for barrier options, see
Lyuu~\cite[pp.~234-242]{lyuu}). We apply it as if we were in the
situation where we begin two levels above the barrier (because we can
\textit{touch} the minimal value but we cannot go down just after) and
we finish $j+1$ levels above the barrier after $n-1$ steps. Then the
number of \textit{lost} paths turns out to be
\begin{equation}
\binom{n-1}{\frac{n-4-j}{2}}=\binom{n-1}{n-3-k}=\binom{n-1}{k+2}.
\end{equation}
We deduce that
\begin{equation}\label{simp-V11-decomp2}
\begin{split}
&\sum_{j=0}^{n-3}(1-u_n^{-j})P(V^*_{1,n}=1, V^*_{n,n}=j,\min_{m\leq n}S_m=S_0)\\
&\qquad=\sum_{k=l_4}^{n-3}(1-u_n^{n-2-2k})
	\Big[\binom{n-1}{k}-\binom{n-1}{k+2}\Big]
	q_n^k (1-q_n)^{n-1-k},
\end{split}
\end{equation}
with~$l_4=\lfloor(n-1)/2\rfloor$.

The third part contains all paths reaching a final node after changing
the minimal value. All of these paths arrive in the partial
Cheuk-Vorst lattice whose initial node is at position~(0,3). We need
the following identity:
\begin{equation}\label{proba-arbre-change}
\begin{split}
&P(V^*_{1,n}=1, V^*_{m,n}=j,\min_{m^*\leq m}S_{m^*}<S_0)\\
&\qquad= P\Big(V^*_{0,n}=0, V^*_{m-1,n}=j,
	k\leq \Big\lfloor\frac{m-3+j}{2}\Big\rfloor\Big)
\end{split}
\end{equation}
which can be shown by a detailed analysis of the possible paths as was
done in Appendix~\ref{annexe-CVdensity} for the original Cheuk-Vorst
lattice. Thus we have
\begin{equation}\label{simp-V11-decomp3}
\begin{split}
&\sum_{j=0}^{n-3}(1-u_n^{-j})P(V^*_{1,n}=1, V^*_{n,n}=j,\min_{m\leq n}S_m<S_0)\\
&\qquad=\sum_{j=0}^{n-3}(1-u_n^{-j})
	\sum_{k=j}^{l_3-1} \Lambda_{j,k,n-1}\,q_n^k (1-q_n)^{n-1-k}.
\end{split}
\end{equation}
From~(\ref{simp-V11-decomp1}), (\ref{simp-V11-decomp2})
and~(\ref{simp-V11-decomp3}), we have that~(\ref{simp-V11-def}) can be
rewritten as
\begin{equation}\label{simp-V11-interm}
\begin{split}
V_n(1,1)
&=\sum_{k=l_4}^{n-1}(1-u_n^{n-2-2k})\binom{n-1}{k}q_n^k (1-q_n)^{n-1-k}\\
&\qquad -\sum_{k=l_4}^{n-3}(1-u_n^{n-2-2k})\binom{n-1}{k+2}q_n^k (1-q_n)^{n-1-k}\\
&\qquad +\sum_{j=0}^{n-3}(1-u_n^{-j})
	\sum_{k=j}^{l_3-1} \Lambda_{j,k,n-1}\,q_n^k (1-q_n)^{n-1-k}.
\end{split}
\end{equation}
To obtain the asymptotic expansion of~$V_n(1,1)$, we apply the same
ideas as it was done previously and we obtain
\begin{equation}\label{simp-V11}
\begin{split}
V_n(1,1)=&\phi_{10}-\frac{p_n}{q_n}e^{-rT}\phi_{11}\\
&\quad +\frac{Q_n}{\Theta_n}(1-d_n)\phi_{12}-\frac{Q_n^{-2}}{1-Q_n}\phi_{13}
+\frac{(Q_n d_n)^{-2}e^{-rT}}{1-Q_n d_n}\phi_{14},
\end{split}
\end{equation}
where 
\begin{itemize}
\setlength{\itemsep}{1mm}
\item[$\phi_{10}=$\!] $\mathcal{B}^*_{n-1,q_n}(n-2-\lfloor n/2\rfloor)
-Q_n^{-2}\,\mathcal{B}^*_{n-1,q_n}(n-\lfloor n/2\rfloor)$,
\item[$\phi_{11}=$\!] $\mathcal{B}^*_{n-1,p_n}(n-2-\lfloor n/2\rfloor)
-P_n^{-2}\,\mathcal{B}^*_{n-1,p_n}(n-\lfloor n/2\rfloor)$,
\item[$\phi_{12}=$\!] $\mathcal{B}_{n-1,q_n}(\lfloor (n-1)/2\rfloor)
-Q_n\,\mathcal{B}_{n-1,q_n}(\lfloor (n-1)/2\rfloor-1)$,
\item[$\phi_{13}=$\!] $Q_n\,\mathcal{B}_{n-1,1-q_n}(\lfloor (n-1)/2\rfloor)
-\mathcal{B}_{n-1,1-q_n}(\lfloor (n-1)/2\rfloor-1)$,
\item[$\phi_{14}=$\!] $Q_n d_n\,\mathcal{B}_{n-1,1-p_n}
(\lfloor (n-1)/2\rfloor)-u_n\,\mathcal{B}_{n-1,1-p_n}(\lfloor 
(n-1)/2\rfloor-1)$,
\end{itemize}
with $\mathcal{B}_{n,p}$ the cumulative distribution function of the
binomial distribution with parameters~$n$ and~$p$, and
$\mathcal{B}^*_{n,p}$ its complementary, and $P_n=p_n/(1-p_n)$.

We now expand each of the terms of~(\ref{def-delta}).  As for the
coefficients, we have in addition to~(\ref{da-theta1})
to~(\ref{terme1}) that
\begin{equation}\label{da-pnexpm2}
P_n^{-2}=1-\frac{8\alpha_2}{\sqrt{n}}+O\Big(\frac{1}{n}\Big),
\end{equation}
\begin{equation}\label{da-qnexpm2}
Q_n^{-2}=1-\frac{8\alpha_1}{\sqrt{n}}+O\Big(\frac{1}{n}\Big),
\end{equation}
and
\begin{equation}\label{da-pnsurqn}
\frac{p_n}{q_n}=1-\frac{\sigma\sqrt{T}}{\sqrt{n}}
	+O\Big(\frac{1}{n}\Big).
\end{equation}
The fact that
\begin{equation*}
\frac{1}{2}+\frac{\alpha}{\sqrt{n}}
	+\frac{\gamma}{n^{3/2}}+O\Big(\frac{1}{n^{5/2}}\Big)
=\frac{1}{2}+\frac{\alpha}{\sqrt{n-1}}
	+\frac{\gamma-\alpha/2}{(n-1)^{3/2}}+O\Big(\frac{1}{(n-1)^{5/2}}\Big)
\end{equation*}
for any $\alpha,\gamma$ and
\begin{equation*}
\begin{split}
&\Phi(A)+\frac{B}{\sqrt{n-1}}+\frac{C}{n-1}+\frac{D}{(n-1)^{3/2}}
	+O\Big(\frac{1}{(n-1)^2}\Big)\\
&\qquad\qquad =\Phi(A)+\frac{B}{\sqrt{n}}+\frac{C}{n}+\frac{D+B/2}{n^{3/2}}
	+O\Big(\frac{1}{n^2}\Big),
\end{split}
\end{equation*}
for any $A$, $B$, $C$, $D$, allows us to apply
Lemma~\ref{lemme-nouveau} also to $\phi_9$,
$\phi_{10},\dots,\phi_{14}$.  Therefore we obtain
from~(\ref{da-theta1}) to~(\ref{terme1}), (\ref{simp-V01}),
(\ref{simp-V11}) to~(\ref{da-pnsurqn}) that
\begin{equation}\label{da-numgamma}
\begin{split}
&u_n V_n(1,1)-d_n V_n(0,1)=
	2\Delta_{BS}^{fl}\frac{\sigma\sqrt{T}}{\sqrt{n}}\\
&\qquad-\sigma\sqrt{T}\Big[\frac{\sigma^2}{r}a_1\Phi(-a_1)
		-2e^{-rT}\Big(1-\frac{\sigma^2}{2r}\Big)a_2\Phi(a_2)
		-2\frac{e^{-a_1^2/2}}{\sqrt{2\pi}}\Big]\frac{1}{n}\\
&\qquad+O\Big(\frac{1}{n^{3/2}}\Big).
\end{split}
\end{equation}
It remains to expand the denominator:
\begin{equation}\label{da-denogamma}
\frac{1}{u_n-d_n}=\frac{1}{2\sigma\sqrt{T}}\sqrt{n}
	+O\Big(\frac{1}{n^{1/2}}\Big).
\end{equation}
So, the theorem follows by multiplying~(\ref{da-numgamma})
with~(\ref{da-denogamma}).

\end{proof}
\section{Numerical example}
In this part, we give a numerical illustration of
Theorems~\ref{thm-call}, \ref{thm-put}, Remark~\ref{remark_r0} and
Theorem~\ref{thm-delta}. If we write~$\Pi^{fl}_n$ in the form
of~(\ref{da-options}), we should find that
$(\Pi^{fl}_n-\Pi^{fl}_{BS})\sqrt{n}$ and
$(\Pi^{fl}_n-\Pi^{fl}_{BS}-\Pi_1/\sqrt{n})n$ almost coincide
respectively with $\Pi_1$ and~$\Pi_2$ for large~$n$. This is what we
will compare in the first two tables below. For our illustration, we
take as values $S_0=80$, $\sigma=0.2$, $r=0.08$ and $T=1$, and for the
call the results are:

\vspace*{4mm}
\noindent\begin{tabularx}{\textwidth}{p{0.33\textwidth}*{5}{>{\raggedleft}X}}
\hline
Number of periods~$n$ & 1,000 & 5,000 & 10,000& 50,000&100,000\tabularnewline
\hline
$C^{fl}_n $           &14.7183&14.8304&14.8571&14.8929&14.9014\tabularnewline
$C^{fl}_{BS}$         &14.9219&14.9219&14.9219&14.9219&14.9219\tabularnewline
$(C^{fl}_n-C^{fl}_{BS})\sqrt{n}$                    
                      &-6.4402&-6.4776&-6.4865&-6.4983&-6.5011\tabularnewline
$C_1$                 &-6.5078&-6.5078&-6.5078&-6.5078&-6.5078\tabularnewline
$(C^{fl}_n-C^{fl}_{BS}-C_1/\sqrt{n})n$
                      & 2.1372& 2.1339& 2.1331& 2.1318& 2.1316\tabularnewline
$C_2$                 & 2.1308& 2.1308& 2.1308& 2.1308& 2.1308\tabularnewline
\hline
\end{tabularx}
\vspace*{4mm}

The results obtained for the call with randomly chosen parameters are consistent 
with the first main theorem.
We take the same parameters for the put, and we obtain the following results:

\vspace*{4mm}
\noindent\begin{tabularx}{\textwidth}{p{0.33\textwidth}*{5}{>{\raggedleft}X}}
\hline
Number of periods~$n$ & 1,000 & 5,000 & 10,000& 50,000&100,000\tabularnewline
\hline
$P^{fl}_n $           &10.0271&10.1820&10.2190&10.2687&10.2805\tabularnewline
$P^{fl}_{BS}$         &10.3090&10.3090&10.3090&10.3090&10.3090\tabularnewline
$(P^{fl}_n-P^{fl}_{BS})\sqrt{n}$                    
                      &-8.9151&-8.9790&-8.9942&-9.0145&-9.0193\tabularnewline
$P_1$                 &-9.0309&-9.0309&-9.0309&-9.0309&-9.0309\tabularnewline
$(P^{fl}_n-P^{fl}_{BS}-P_1/\sqrt{n})n$
                      & 3.6627& 3.6695& 3.6710& 3.6729& 3.6734\tabularnewline
$P_2$                 & 3.6744& 3.6744& 3.6744& 3.6744& 3.6744\tabularnewline
\hline
\end{tabularx}
\vspace*{4mm}

The results are also consistent with the result stated 
for the put with randomly chosen parameters.
Next we compare $(C^{fl}_n-C^{fl}_{BS})\sqrt{n}$
and $C_1$ for the call with $r=0$; the other parameters 
remaining the same.

\vspace*{4mm}
\noindent\begin{tabularx}{\textwidth}{p{0.33\textwidth}*{5}{>{\raggedleft}X}}
\hline
Number of periods~$n$ & 1,000 & 5,000 & 10,000& 50,000&100,000\tabularnewline
\hline
$C^{fl}_n $           &11.7748&11.8917&11.9197&11.9571&11.9659\tabularnewline
$C^{fl}_{BS}$         &11.9874&11.9874&11.9874&11.9874&11.9874\tabularnewline
$(C^{fl}_n-C^{fl}_{BS})\sqrt{n}$                    
                      &-6.7233&-6.7664&-6.7767&-6.7903&-6.7935\tabularnewline
$C_1$                 &-6.8013&-6.8013&-6.8013&-6.8013&-6.8013\tabularnewline
\hline
\end{tabularx}

\pagebreak
The results are again consistent with what was announced.
Finally we establish the table for the approximation of the delta with the
same values for~$S_0$, $\sigma$ and~$T$ as previously and we take $r=0.08$ as in the 
two first examples.

\vspace*{4mm}
\noindent\begin{tabularx}{\textwidth}{p{0.33\textwidth}*{5}{>{\raggedleft}X}}
\hline
Number of periods~$n$ & 1,000 & 5,000 & 10,000& 50,000&100,000\tabularnewline
\hline
$\Delta^{fl}_n $      & 0.2003& 0.1927& 0.1909& 0.1885& 0.1879\tabularnewline
$\Delta^{fl}_{BS}$    & 0.1865& 0.1865& 0.1865& 0.1865& 0.1865\tabularnewline
$(\Delta^{fl}_n-\Delta^{fl}_{BS})\sqrt{n}$                    
                      & 0.4360& 0.4392& 0.4400& 0.4410& 0.4413\tabularnewline
$\Delta_1$            & 0.4418& 0.4418& 0.4418& 0.4418& 0.4418\tabularnewline
\hline
\end{tabularx}
\vspace*{4mm}

With this table, we also confirm the expected results.

\section{Conclusion}
In our paper, we give an asymptotic expansion in powers of~$n^{-1/2}$
for the price and the delta of European lookback options with floating
strike evaluated by the binomial tree model. We obtain the convergence
to the Black-Scholes price at the origin, and we procure explicit
formulas for the coefficients of~$n^{-1/2}$ and~$n^{-1}$ if $r$~is not
0; for $r=0$ we see that the coefficient of~$n^{-1/2}$ remains the
same, but we do not get the coefficient for~$n^{-1}$. We also obtain
a first order approximation for the delta. These values were
confirmed on random examples.

Several issues can be proposed directly from our work. The first
possibility is to look at what is happening if the risk free interest
rate is zero and therefore to see if also the coefficient of~$n^{-1}$
is unchanged.

Another possibility is not to restrict the evaluation to time~$0$.
The problem seems difficult since a deduction of the formula from
Cheuk-Vorst will be complicated with levels no longer necessarily
integers. Moreover, the tree does not necessarily start at zero as the
minimum between the origin and the present time may be different from
the current value.

It might also be interesting to develop similar theorems for European
lookback options with fixed strike by using the Cheuk-Vorst lattice
tree described in~\cite{cheuk_vorst}. We give in
Appendix~\ref{annexe-fixed} a formula for the price at emission.
However, this formula no longer
involves the cumulative distribution function of binomial distributions, 
so that the techniques of this paper do not allow us to
deduce an asymptotic expansion.
We hope to return to this problem in a future work.

We note that the goal of our research was to study the rate of
convergence of the CRR binomial model. It would be interesting to
develop alternative models with a~higher rate of convergence (as was
done, for example, by Leisen and Reimer~\cite{leisen_reimer} and
Joshi~\cite{joshi} for vanilla options).

\appendix
\section{An equivalent formula}\label{annexe-equivalence}
In their book, Föllmer and Schied~\cite{follmer_schied} introduce a
method for pricing the lookback put. They consider that the price of
the put is the discounted expectation of $S_{\max}-S_T$. They obtain
that the price is~$S_0 W_n^{fs}$ with
\begin{equation}\label{annexe_put}
W_n^{fs}=-1+e^{-rT}\sum_{j=0}^n u_n^j \sum_{k=j}^l \Lambda_{j,k,n} 
p_n^k (1-p_n)^{n-k},
\end{equation}
where $l=\lfloor (n+j)/2\rfloor$,
$p_n=(e^{rT/n}-d_n)/(u_n-d_n)$ and
\begin{equation*}
\Lambda_{j,k,n}=\binom{n}{k-j}-\binom{n}{k-j-1}
=\frac{2(j-k)+n+1}{n+1}\binom{n+1}{k-j}
\end{equation*}
if $k>j$, and $\Lambda_{j,k,n}=1$ if $k=j$.  The explicit
expression~(\ref{annexe_put}) is only given in the first edition of
the book. Moreover, there is a small mistake which we have corrected
here.  One can easily detect the error in the penultimate line of
their proof.

Similarly we can derive that the price of the call is~$S_0 V_n^{fs}$ with
\begin{equation}\label{annexe_call}
V_n^{fs}=1-e^{-rT}\sum_{j=0}^n d_n^j \sum_{k=j}^l \Lambda_{j,k,n} 
p_n^{n-k} (1-p_n)^{k}.
\end{equation}
The proof of the equality between the two equivalent forms
(\ref{def-V}) and~(\ref{annexe_call}) of the evaluation of the call
(respectively of (\ref{def-W}) and~(\ref{annexe_put}) for the put) is
long and technical. We omit the details.

\section{Probability mass functions for the Cheuk-Vorst Lattice}
\label{annexe-CVdensity}
In this section we will prove Theorem~\ref{CV-formulas}.  First, it
follows from $\langle 8\rangle$, $\langle 11\rangle$ and $\langle
13\rangle$ in~\cite{cheuk_vorst} that $V_n(0,0)$ is given
by~(\ref{def-V}). Moreover, by the arguments in~\cite{cheuk_vorst},
$P(V^*_{0,n}=0, V^*_{m,n}=j)$ is the sum of the product of the
probabilities along all paths leading to level~$j$ at time~$m$; here,
an up carries probability~$q_n$ and a down carries
probability~$1-q_n$.  Thus in order to prove~(\ref{densite-V}) it
suffices to show that the number of paths with exactly $k$~ups
arriving at time~$m$ at the level~$j$ is
\[
\Lambda_{j,k,m}=\binom{m}{k-j}-\binom{m}{k-j-1},
\qquad j=0,1,\dots,m
\]
if $j\leq k\leq \lfloor (m+j)/2\rfloor$, and the number of such paths
is~0 for the other values of~$k$.  Note that $\binom{m}{-1}=0$.

The result is obvious for $j=m-1$ and $j=m$. In this cases there is
only one path leading to the level~$j$, and it has exactly $m-1$ and
$m$~ups respectively, see Figure~\ref{fig-cheuk}.

Moreover, the result is trivial for $m=0$ and~$m=1$.  We will prove
the claim by induction on~$m$, where it suffices to consider
$j=0,1,\dots,m-2$.  We assume that the claim is correct for $m-1$,
$m\geq 2$.  We distinguish two cases.
\begin{enumerate}
\item If $j=0$, there are two downward paths leading to it; they come
  from the nodes located at levels~$0$ or~$1$ in the previous step.

\noindent If $k=0$, the only possible path comes from $j=0$ and so
\[
\Lambda_{0,0,m}=\Lambda_{0,0,m-1}=1.
\]
If $1\leq k\leq\lfloor (m-1)/2\rfloor$, 
then we have
\[
\begin{split}
\Lambda_{0,k,m} &=\Lambda_{0,k,m-1}+\Lambda_{1,k,m-1}\\[1mm]
&= \binom{m-1}{k}-\binom{m-1}{k-1}+\binom{m-1}{k-1}-\binom{m-1}{k-2} \\[1mm]
&= \binom{m}{k}-\binom{m}{k-1}.
\end{split}
\]
If $k=m/2$, which can only happen if $m$~is even, the only possible
path comes from $j=1$ and so
\[
\quad\qquad
\Lambda_{0,\frac{m}{2},m}=\Lambda_{1,\frac{m}{2},m-1}
=\binom{m-1}{\frac{m}{2}-1}-\binom{m-1}{\frac{m}{2}-2}
=\binom{m}{\frac{m}{2}}-\binom{m}{\frac{m}{2}-1}.
\]
Finally, if $k>\lfloor m/2\rfloor$ there are no paths with $k$~ups. 
\item If $1\leq j\leq m-2$, there are two paths leading to it, one
  upward path coming from the node located at level~$j-1$ and one
  downward path coming from the node located at level~$j+1$ at the
  previous step.

\noindent If $0\leq k<j$ there are no paths with $k$~ups.

\noindent If $k=j$, the only possible path comes from $j-1$ and so
\[
\Lambda_{j,j,m}=\Lambda_{j-1,j-1,m-1}=1.
\]
If $j+1\leq k\leq\lfloor (m+j)/2\rfloor$, 
then we have
\[
\begin{split}
\quad\qquad\Lambda_{j,k,m}&=\Lambda_{j-1,k-1,m-1}+\Lambda_{j+1,k,m-1}\\
&=\binom{m-1}{k-j}-\binom{m-1}{k-j-1}+\binom{m-1}{k-j-1}-\binom{m-1}{k-j-2}\\
&=\binom{m}{k-j}-\binom{m}{k-j-1}.
\end{split}
\]
Finally, if $k>\lfloor (m+j)/2\rfloor$ there are no paths with $k$~ups. 
\end{enumerate}
With these two items all cases are covered and the result is proved.

The probability $P(W^*_{0,n}=0, W^*_{m,n}=j)$ can be derived by a
symmetrical reasoning.

\section{Proof of Lemma~\ref{lemme-nouveau}}\label{annexe-proof-lemma-new}

To demonstrate Lemma~\ref{lemme-nouveau}, we will write each term
in~(\ref{formule-lemma}) in its asymptotic expansion in powers
of~$n^{1/2}$ to the order 3. First, we see that
\begin{equation}\label{eg-terme1}
\int_{\xi}^{\infty}e^{-u^2/2}\,\mathrm{d}u\\
=\sqrt{2\pi}\,\Phi(A)+\int_{A}^{-\xi}e^{-u^2/2}\,\mathrm{d}u.
\end{equation}
This writing is motivated by the convergence of~$\xi$ to~$-A$ as
$n\to\infty$. Indeed we have that
\begin{equation}\label{da-xinum}
j_n-n p_n-\frac{1}{2}=(a-\alpha)\sqrt{n}+b_n-\beta+\frac{c-\gamma}{\sqrt{n}}
+\frac{d-\delta}{n}+O\Bigl(\frac{1}{n^{3/2}}\Bigr),
\end{equation}
and
\begin{equation}
p_n(1-p_n)=\frac{1}{4}-\frac{\alpha^2}{n}
-\frac{2\alpha\beta}{n^{3/2}}+O\Big(\frac{1}{n^2}\Big).
\end{equation}
With Taylor expansion about~$1/4$ for the square root of the previous
expression,
\begin{equation}\label{da-xiden}
\sqrt{p_n(1-p_n)}=\frac{1}{2}-\frac{\alpha^2}{n}
-\frac{2\alpha\beta}{n^{3/2}}+O\Big(\frac{1}{n^2}\Big).
\end{equation}
As a consequence of~(\ref{da-xinum}) and~(\ref{da-xiden}) 
the expansion of $\xi$ is
\begin{equation}\label{da-xi1}
\xi=\frac{j_n-np_n-1/2}{\sqrt{np_n(1-p_n)}}
= -A-\frac{B_n}{\sqrt{n}}-\frac{\kappa}{n}-\frac{\lambda}{n^{3/2}}
+O\Bigl(\frac{1}{n^2}\Bigr),
\end{equation}
where $\kappa=2(\alpha^2 A+\gamma-c)$ and 
$\lambda=2(\alpha^2 B_n+2\alpha\beta A+\delta-d)$.
By Taylor expansion about~$A$, 
there is some $\eta$ between~$A$ and~$-\xi$
such that
\begin{equation}
\begin{split}
\int_{A}^{-\xi}\!\!e^{-u^2/2}\,\mathrm{d}u
&=e^{-A^2/2}\Big[(-\xi-A)-\frac{A}{2}(-\xi-A)^2\\
&\quad +\frac{A^2-1}{6}(-\xi-A)^3\Big]
-\frac{\eta (\eta^2-3)}{24}e^{-\eta^2/2}(-\xi-A)^4.
\end{split}
\end{equation}
Because $-\xi-A=O(n^{-1/2})$ we have $(-\xi-A)^4=O(n^{-2})$.
Since the coefficient of $(-\xi-A)^4$ is bounded on~$\mathbb{R}$,
using~(\ref{da-xi1}), the integral becomes
\begin{equation}\label{da-terme11}
\begin{split}
\int_{A}^{-\xi}e^{-u^2/2}\,\mathrm{d}u
&=e^{-A^2/2}\Big[\frac{B_n}{\sqrt{n}}
+\frac{\kappa-AB_n^2/2}{n}\\
&\quad +\frac{\lambda-\kappa AB_n+(A^2-1)B_n^3/6}{n^{3/2}}\Big]
+O\Bigl(\frac{1}{n^2}\Bigr).
\end{split}
\end{equation}
Using~(\ref{eg-terme1}) and~(\ref{da-terme11}), 
the first term~$T_1$ in~(\ref{formule-lemma}) can be written
\begin{equation}\label{da-terme1}
\begin{split}
T_1&=\Phi(A)+\frac{e^{-A^2/2}}{\sqrt{2\pi}}\Big[\frac{B_n}{\sqrt{n}}
+\frac{\kappa-AB_n^2/2}{n}\\
&\quad +\frac{\lambda-\kappa AB_n+(A^2-1)B_n^3/6}{n^{3/2}}\Big]
+O\Bigl(\frac{1}{n^2}\Bigr).
\end{split}
\end{equation}
For the second term, we have that 
$1-2p_n=-2\alpha/\sqrt{n}-2\beta/n+O(n^{-3/2})$ and therefore 
with~(\ref{da-xiden}),
\begin{equation}\label{da-terme21}
\frac{1-2p_n}{\sqrt{p_n(1-p_n)}}=-\frac{4\alpha}{\sqrt{n}}
-\frac{4\beta}{n}
+O\Bigl(\frac{1}{n^{3/2}}\Bigr).
\end{equation}
Using a Taylor expansion about~$-A$ we get
\begin{equation}\label{da-terme22}
(1-\xi^2)e^{-\xi^2/2}
=e^{-A^2/2}\Bigl[1-A^2-\frac{(3-A^2)AB_n}{\sqrt{n}}\Bigr]
+O\Bigl(\frac{1}{n}\Bigr).
\end{equation}
By combining~(\ref{da-terme21}) and~(\ref{da-terme22}), 
we establish that the second term~$T_2$ in~(\ref{formule-lemma}) is
\begin{equation}\label{da-terme2}
T_2=\frac{e^{-A^2/2}}{6\sqrt{2\pi}}
\Big[\frac{4\alpha(1\!-\!A^2)}{n}+
\frac{4(1\!-\!A^2)(\beta\!-\!\alpha AB_n)\!-\!8\alpha AB_n}{n^{3/2}}\Big]
+O\Big(\frac{1}{n^2}\Big).
\end{equation}
From~(\ref{da-xi1}) and~(\ref{da-terme22}), 
we deduce that the last term~$T_3$ in~(\ref{formule-lemma}) is
\begin{equation}\label{da-terme3}
T_3=-\frac{e^{-A^2/2}}{12\sqrt{2\pi}}
\Big[\frac{A(1-A^2)}{n}+
\frac{(1-4A^2+A^4)B_n}{n^{3/2}}\Big]
+O\Big(\frac{1}{n^2}\Big).
\end{equation}
By combining 
Lemma~\ref{lemme-intermediaire}, (\ref{da-terme1}),
(\ref{da-terme2}) and~(\ref{da-terme3}) we have that
\begin{align*}
&\sum_{k=j_n}^{n}\binom{n}{k}p_n^k(1-p_n)^{n-k}
=T_1+T_2+T_3+O\Big(\frac{1}{n^2}\Big)\\
&= \Phi(A)+\frac{e^{-A^2/2}}{\sqrt{2\pi}}\frac{B_n}{\sqrt{n}}
+\frac{e^{-A^2/2}}{\sqrt{2\pi}}
\Big[\kappa
+\frac{2\alpha}{3}(1-A^2)-\frac{A}{12}(1-A^2)-\frac{AB_n^2}{2}\Big]\frac{1}{n}\\
&\quad +\frac{e^{-A^2/2}}{\sqrt{2\pi}}
\Bigl[\lambda-\kappa AB_n-\frac{B_n^3}{6}(1-A^2)
+\frac{2}{3}(1-A^2)(\beta-\alpha AB_n)\\
&\qquad-\frac{4}{3}\alpha AB_n
-\frac{B_n}{12}(1-4A^2+A^4)\Bigr]\frac{1}{n^{3/2}}
+O\Bigl(\frac{1}{n^2}\Bigr).
\end{align*}
The result of Lemma~\ref{lemme-nouveau} is obtained by simplification.

\section{Proof of Lemma~\ref{lemme-intermediaire}}\label{annexe-proof-lemma}

The proof of Lemma~\ref{lemme-intermediaire} is based on the following
result of Lin and Palmer~\cite{lin_palmer}, which in turn was based on
a lemma of Chang and Palmer~\cite{chang_palmer}.
\begin{lemme}
  Provided $p_n=1/2\,+\,O(n^{-1/2})$ and $p_n(1-p_n)=1/4+O(n^{-1})$ as
  $n\to\infty$ and $0\leq j_n\leq n+1$ for $n$ sufficiently large,
  then
\begin{align*}
&\sum_{k=j_n}^{n}\binom{n}{k}p_n^k(1-p_n)^{n-k}
= \frac{1}{\sqrt{2\pi}}\int_{\xi_1}^{\xi_2}e^{-u^2/2}\,\mathrm{d}u\\
&\quad +\frac{1}{6\sqrt{2\pi}}\frac{1-2p_n}{\sqrt{p_n(1-p_n)}}
\Bigl[(1-\xi_2^2)e^{-\xi_2^2/2}-(1-\xi_1^2)e^{-\xi_1^2/2}\Bigr]
\frac{1}{\sqrt{n}}\\
&\quad +\frac{1}{12\sqrt{2\pi}}
\Bigl[\xi_2 e^{-\xi_2^2/2}(\xi_2^2-1)-\xi_1 e^{-\xi_1^2/2}(\xi_1^2-1)\Bigr]
\frac{1}{n}+O\Bigl(\frac{1}{n^2}\Bigr),
\end{align*}
where $\xi_1=\frac{j_n-np_n-1/2}{\sqrt{np_n(1-p_n)}}$
and $\xi_2=\frac{n(1-p_n)+1/2}{\sqrt{np_n(1-p_n)}}$.
\end{lemme}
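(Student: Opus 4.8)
The plan is to reduce the claim to an Edgeworth-type expansion of the binomial distribution function — the substance of the lemma of Chang and Palmer~\cite{chang_palmer} invoked above — and then to specialize it to the interval $[j_n,n]$. Let $S_n$ have the $\mathrm{Binomial}(n,p_n)$ law, so that the left-hand side equals $P(j_n\le S_n\le n)=1-P(S_n\le j_n-1)$. A single $\mathrm{Bernoulli}(p_n)$ summand has variance $\sigma_n^2=p_n(1-p_n)=\tfrac14+O(n^{-1})$, third central moment $\sigma_n^2(1-2p_n)$ and fourth cumulant $\sigma_n^2(1-6\sigma_n^2)$, so the standardized cumulants satisfy $\lambda_3=\kappa_3/\sigma_n^3=(1-2p_n)/\sqrt{p_n(1-p_n)}=O(n^{-1/2})$ and $\lambda_4=\kappa_4/\sigma_n^4=-2+O(n^{-1})$; more generally every odd cumulant of a Bernoulli law is an odd polynomial in $p_n-\tfrac12$ and so carries the factor $1-2p_n$, whence $\lambda_{2m+1}=O(n^{-1/2})$ for all $m$. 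This near-symmetry, forced precisely by the hypothesis $p_n=\tfrac12+O(n^{-1/2})$, is what makes the remainder genuinely $O(n^{-2})$: in the Edgeworth series the odd-order terms are already one order below their nominal size, so truncating after the $n^{-1}$ term leaves an $O(n^{-2})$ error instead of the $O(n^{-3/2})$ one has for a general $p_n$.

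I would next start from the refined local limit theorem
\[
P(S_n=k)=\frac{e^{-z_k^2/2}}{\sigma_n\sqrt{2\pi n}}\Bigl(1+\frac{\lambda_3}{6\sqrt n}H_3(z_k)+\frac1n\Bigl[\frac{\lambda_4}{24}H_4(z_k)+\frac{\lambda_3^2}{72}H_6(z_k)\Bigr]+O(n^{-3/2})\Bigr),
\]
where $z_k=(k-np_n)/(\sigma_n\sqrt n)$ and the $H_j$ are Hermite polynomials, valid uniformly for $k$ in the bulk and with an exponential bound outside. Summing this over $k=0,\dots,j_n-1$ and passing from the sum to an integral by the midpoint Euler--Maclaurin formula turns $P(S_n\le j_n-1)$ into $\int_{-\infty}^{\xi_1}\phi(z)\bigl(1+\tfrac{\lambda_3}{6\sqrt n}H_3(z)+\cdots\bigr)\,\mathrm{d}z$ — which gives the Gaussian main term $\Phi(\xi_1)$ plus the Edgeworth corrections after integrating the Hermite terms — together with a residual lattice term equal to $\tfrac{\xi_1\phi(\xi_1)}{24\sigma_n^2 n}+O(n^{-2})=\tfrac{\xi_1\phi(\xi_1)}{6n}+O(n^{-2})$, where $\xi_1=(j_n-np_n-\tfrac12)/\sqrt{np_n(1-p_n)}$ is the normalized endpoint from the statement. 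It is precisely this lattice term — invisible to the naive continuity correction — that modifies the $\lambda_4$-Hermite contribution so that the $n^{-1}$ coefficient ends up involving $\xi_1(\xi_1^2-1)$, as in the statement, rather than the $\xi_1(\xi_1^2-3)$ the Hermite term would give alone. Passing to the complement and using $\Phi(-\xi_1)=\tfrac1{\sqrt{2\pi}}\int_{\xi_1}^\infty e^{-u^2/2}\,\mathrm{d}u$ yields the one-sided asymptotics; to recover the two-endpoint form in the statement I would simply add back the quantities involving $\xi_2=\tfrac{n(1-p_n)+1/2}{\sqrt{np_n(1-p_n)}}$, which is $\ge c\sqrt n\to\infty$, so that $\int_{\xi_2}^\infty e^{-u^2/2}\,\mathrm{d}u$, $(1-\xi_2^2)e^{-\xi_2^2/2}$ and $\xi_2(\xi_2^2-1)e^{-\xi_2^2/2}$ — the latter two multiplied where needed by the bounded factor $(1-2p_n)/\sqrt{p_n(1-p_n)}$ — are all $O(e^{-cn})$ and are absorbed into the $O(n^{-2})$ remainder. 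Substituting $\lambda_3=(1-2p_n)/\sqrt{p_n(1-p_n)}$ into the $n^{-1/2}$ coefficient then reproduces exactly~(\ref{formule-lemma}).

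The main obstacle is the \emph{uniform} control of the local-limit remainder: the error in the displayed expansion of $P(S_n=k)$ must be uniform in $k$ and $n$ so that it survives being summed over $k$. This is obtained, as in Chang--Palmer, by Fourier inversion of the characteristic function $\psi_n(t)=(1-p_n+p_ne^{it})^n$ — expanding $\log\psi_n$ around $0$ to the required order on a central window $|t|\le\delta$ via the cumulants above, and bounding $|\psi_n(t)|\le e^{-cn}$ for $|t|>\delta$ uniformly because $p_n$ is bounded away from $0$ and $1$ — and this is where essentially all the analytic work sits. One must also dispatch the extreme ranges of $j_n$ separately: if $j_n$ lies within $O(\sqrt n\log n)$ of $0$ or of $n+1$, then $\xi_1$ drifts to $\mp\infty$ like $\sqrt n$ and one checks directly that both sides of~(\ref{formule-lemma}) equal $1+O(e^{-cn})$, respectively $O(e^{-cn})$, so the claimed error is honest throughout $0\le j_n\le n+1$. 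Once this uniform estimate is secured, the remaining steps — Euler--Maclaurin summation, discarding the exponentially small $\xi_2$-pieces, and the Hermite bookkeeping — are routine. (By the same exponential smallness of the $\xi_2$-terms, Lemma~\ref{lemme-intermediaire} follows at once: one drops those terms and replaces $\int_{\xi_1}^{\xi_2}$ by $\int_{\xi_1}^{\infty}$.)
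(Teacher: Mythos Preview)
The paper does not prove this statement at all: it is quoted verbatim as a result of Lin and Palmer (itself built on a lemma of Chang and Palmer), and no argument is supplied. What Appendix~\ref{annexe-proof-lemma} actually proves is the simpler Lemma~\ref{lemme-intermediaire}, and it does so by taking the Lin--Palmer lemma for granted and then observing that the $\xi_2$-terms are exponentially negligible --- precisely the reduction you record in your final parenthesis.

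Your proposal is therefore not an alternative to the paper's proof but a reconstruction of the Chang--Palmer/Lin--Palmer argument that the paper cites. As such the outline is sound. You correctly isolate the mechanism that upgrades the remainder from $O(n^{-3/2})$ to $O(n^{-2})$: the hypothesis $p_n=\tfrac12+O(n^{-1/2})$ forces $\lambda_3=(1-2p_n)/\sqrt{p_n(1-p_n)}=O(n^{-1/2})$ (and likewise all odd standardized cumulants), so the nominally $n^{-3/2}$ Edgeworth terms are already $O(n^{-2})$. Your bookkeeping at order $n^{-1}$ also checks: with $\lambda_4=-2+O(n^{-1})$ the integrated $H_4$-term contributes $-\tfrac{1}{12}\xi_1(\xi_1^2-3)\phi(\xi_1)$ to the tail probability, and the midpoint Euler--Maclaurin correction $\xi_1\phi(\xi_1)/(24\sigma_n^2 n)=\xi_1\phi(\xi_1)/(6n)$, after passing to the complement, shifts this to $-\tfrac{1}{12}\xi_1(\xi_1^2-1)\phi(\xi_1)$, matching the statement. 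The genuine analytic content --- uniform control of the local-limit remainder via Fourier inversion of $(1-p_n+p_ne^{it})^n$, together with the separate treatment of $j_n$ near $0$ or $n+1$ --- is exactly where Chang--Palmer do the work, and you flag it correctly as the main obstacle rather than attempting it here.

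In short: there is no discrepancy to report. You have sketched, along the standard lines, a proof of a lemma the paper merely imports from the literature.
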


We will simplify the statement of the result.  Indeed, we see that the
assumption $p_n(1-p_n)=1/4+O(n^{-1})$ is a consequence of the fact
that $|p_n-1/2|\leq C/\sqrt{n}$ because
\begin{equation}\label{ma4}
\Big|p_n(1-p_n)-\frac{1}{4}\Big|=\Big|-\Big(p_n-\frac{1}{2}\Big)^2\Big|
\leq \frac{C^2}{n}.
\end{equation}
Next we have that $n(1-p_n)+1/2=n/2+O(\sqrt{n})$ and, with Taylor
expansion about $1/4$, $\sqrt{p_n(1-p_n)}=1/2+O(n^{-1})$.  As a
consequence we obtain that $\xi_2=\sqrt{n}+O(1)$. For large~$n$,
$\xi_2$ will always be greater than 2 because it tends to
infinity. For all~$u$ greater than 2, we have the relation
$e^{-u^2/2}\leq e^{-u}$. Therefore, for~$n$ sufficiently large,
\begin{equation}\label{da-terme12}
\int_{\xi_2}^{\infty}e^{-u^2/2}\,\mathrm{d}u
\leq \int_{\xi_2}^{\infty}e^{-u}\,\mathrm{d}u=e^{-\xi_2}.
\end{equation}
This gives us that the integral is~$O(n^{-2})$ because $e^{-\xi_2}=O(n^{-j})$ for 
any $j>1$ and in particular for $j=2$. With this argument,
\begin{equation}\label{ma1}
\int_{\xi_1}^{\xi_2}e^{-u^2/2}\,\mathrm{d}u
=\int_{\xi_1}^{\infty}e^{-u^2/2}\,\mathrm{d}u+O\Big(\frac{1}{n^2}\Big).
\end{equation}
In the same way we deduce that
\begin{equation}\label{ma2}
(1-\xi_2^2)e^{-\xi^2_2/2}=O\Big(\frac{1}{n}\Big),
\end{equation}
and
\begin{equation}\label{ma3}
\xi_2(\xi_2^2-1)e^{-\xi^2_2/2}=O\Big(\frac{1}{n}\Big).
\end{equation}
Replacing $\xi_1$ by $\xi$ and using~(\ref{ma4}), (\ref{ma1}),
(\ref{ma2}) and~(\ref{ma3}) and the fact that
$(1-2p_n)/\sqrt{p_n(1-p_n)}=O(n^{-1/2})$ we obtain
Lemma~\ref{lemme-intermediaire}.

\section{Fixed strike case}\label{annexe-fixed}

The payoff function for a European lookback call with fixed strike~$K$
is given by
\[
C^{fx}=\max (\max_{t\leq T}S_t-K,0).
\]
The payoff function for the lookback put with fixed strike~$K$ is
given by
\[
P^{fx}=\max (K-\min_{t\leq T}S_t,0).
\]
Subsequently, we describe the procedure to rewrite the price of the
call; one for the put is entirely similar. We use the same notation,
$X_n(j,m)$, as Cheuk and Vorst for the value associated with a node,
where $j$~refers to the level of the node (unlike these authors, we
consider~$j$ as the absolute value of the level) and $m$~refers to the
time~$mT/n$. At step~$m$, we decompose the option into two parts: a
known payoff at maturity equal to
\begin{equation}
\max(\max_{m^*\leq m}S_{m^*}-K,0),
\end{equation}
and a European lookback call option with fixed strike~$K^*$, where
\begin{equation}
K^*=\max(\max_{m^*\leq m}S_{m^*},K).
\end{equation}
In other words, if we are in a situation of some gain, it is like
having a new option plus money. This means that if we are at level~0
and we have an upward jump, the tree will refer to a coefficient
linked with an option with a higher strike~$K^*$.

If $S_0\geq K$, the payoff at maturity will be at least~$S_0-K$. In
this case, it is equivalent to have an option with strike~$S_0$ and
$S_0-K$ in money at maturity. From this, the tree for this new option
begins at level~0 and we have that
\begin{equation}
C^{fx}_n=S_0 X_n(0,0)+(S_0-K)e^{-rT}.
\end{equation}
The next step is to rewrite~$X_n(0,0)$. All values associated with the
nodes are obtained by backward induction using the expectation with
respect to the probability~$q_n$ of the next two nodes except for the
final nodes and the nodes at level~0. An upward jump at level~0 for
all~$m<n$ provides a final gain of~$S_m(u_n-1)$. We thus obtain
\begin{equation}
\begin{split}
X_n(0,0)
&=\sum_{j=0}^n X_n(j,n)\, P(W^*_{0,n}=0, W^*_{n,n}=j)\\
&\quad+\sum_{m=0}^{n-1} p_n(u_n-1)\, e^{-rT(n-m)/n}\, P(W^*_{0,n}=0, W^*_{m,n}=0),
\end{split}
\end{equation}
where the probability mass function of~$W^*_{m,n}$ is given
by~(\ref{densite-W}). As noted by Cheuk and Vorst, $X_n(j,n)=0$ for
all~$j$. We therefore have that
\begin{equation}
C^{fx}_n=S_0\,p_n(u_n-1)\,e^{-rT}
	\sum_{m=0}^{n-1}e^{rTm/n}
	\sum_{k=0}^{\lfloor m/2\rfloor} \Lambda_{0,k,m} (1-q_n)^k q_n^{m-k}.
\end{equation}

If $S_0<K$, the level where the tree begins is the non-zero positive
number
\begin{equation}
j_0=\frac{\ln(K/S_0)}{\sigma\sqrt{T/n}}.
\end{equation}
We consider in the sequel all values~$n$ such that~$j_0$ is not an
integer because that occurs seldom or never. Because of this, the
different possible levels for the nodes will be integers (after
changing for the first time the moving strike) and real numbers with
the same fractional part as~$j_0$.  In this case, all the values
associated with the nodes are also obtained by backward induction
using the expectation with respect to the probability~$q_n$ of the
next two nodes except for the final nodes, the nodes at level~0 and
for nodes at level~$\{j_0\}\in\mathopen[0,1\mathclose[$. These last
nodes are justified by the fact that an upward jump from this node
will change~$K^*$.  An upward jump at level~$\{j_0\}$ for $m<n$
provides a final gain of~$S_m(u_n-u_n^{\{j_0\}})$. Such a level only
occurs at odd~$m$ if~$\lfloor j_0\rfloor$ is odd and at even~$m$
if~$\lfloor j_0\rfloor$ is even. Of course, the least possible~$m$
is~$\lfloor j_0\rfloor$ because we can increase only one level at a
time. Moreover, the first time that a level~0 exists is~$\lfloor
j_0\rfloor+1$. We thus obtain
\begin{equation}\label{expression-X}
\begin{split}
X_n(j_0,0)
&=\!\!\sum_{m=\lfloor j_0\rfloor+1}^{n-1}\!\!\!\!\! p_n(u_n-1)e^{-rT(n-m)/n}\, 	
	P(W^*_{0,n}\!=\!j_0, W^*_{m,n}\!=\!0)\\
&+\!\!\sum_{m=\lfloor j_0\rfloor}^{n-1}\!\!\! p_n(u_n\!-\!u_n^{\{j_0\}})
	e^{-rT(n-m)/n}\,P(W^*_{0,n}\!=\!j_0, W^*_{m,n}\!=\!\{j_0\}),
\end{split}
\end{equation}
where $P(W^*_{0,n}=j_0, W^*_{m,n}=0)$ corresponds to the probability
to arrive at level~$0$ leaving the node~$(j_0,0)$ after $m$~periods
and $P(W^*_{0,n}=j_0, W^*_{m,n}=\{j_0\})$ to the probability to arrive
at level~$\{j_0\}$ leaving the node~$(j_0,0)$ after $m$~periods
without changing the moving strike. Using a similar idea
as~(\ref{proba-arbre-change}), we have that
\begin{equation}\label{expression-X2}
\begin{split}
P(W^*_{0,n}=j_0, W^*_{m,n}=0)
&=P(W^*_{0,n}=\lfloor j_0\rfloor, W^*_{m,n}=0,\max_{m\leq n}S_m>K) \\
&=\sum_{k=0}^{l_1} \Lambda_{0,k,m} (1-q_n)^k q_n^{m-k},
\end{split}
\end{equation}
where $l_1=\lfloor (m-1-\lfloor j_0\rfloor )/2\rfloor$. Considering
the number of downward jumps rather than the level reached and using
the reflection principle, we have that the second term
of~(\ref{expression-X}) is equal to
\begin{equation}\label{expression-X3}
\sum_{k=0}^{l_2} p_n(u_n-u_n^{\{j_0\}})\, e^{-rT(n-\lfloor j_0\rfloor-2k)/n}\, 
	P(W^*_{0,n}\!=\!j_0, W^*_{m,n}\!=\!\{j_0\}),
\end{equation}
with $l_2=\lfloor (n-1-\lfloor j_0\rfloor )/2\rfloor$ and
\begin{equation}
P(W^*_{0,n}\!=\!j_0, W^*_{m,n}\!=\!\{j_0\})=
	\Lambda_{0,k,n-\lfloor j_0\rfloor-2k}\,(1-q_n)^k q_n^{\lfloor j_0\rfloor +k}.
\end{equation}
We thus obtain an expression for~$C^{fx}_n=S_0\, X_n(j_0,0)$ 
if $S_0<K$ from~(\ref{expression-X}), 
(\ref{expression-X2}) and~(\ref{expression-X3}).


\begin{thebibliography}{10}

\bibitem{babbs}
\textsc{Babbs S.H.} (2000):
Binomial valuation of lookback options.
\emph{J. Econom. Dynam. Control} 24, 1499-1525.

\bibitem{bernis}
\textsc{Bernis G., Gobet E.,} and \textsc{Kohatsu-Higa A.} (2003):
Monte Carlo evaluation of Greeks for multidimensional barrier and lookback options.
\emph{Math. Finance} 13, 99-113.

\bibitem{chang_palmer}
\textsc{Chang L.B.,} and \textsc{Palmer K.} (2007):
Smooth convergence in the binomial model.
\emph{Finance Stoch.} 11, 91-105.

\bibitem{cheuk_vorst}
\textsc{Cheuk T.H.F,} and \textsc{Vorst T.C.F} (1997):
Currency lookback options and observation frequency: a binomial approach.
\emph{J. Int. Money Finance} 16, 173-187.

\bibitem{CRR}
\textsc{Cox J.C., Ross S.A.,} and \textsc{Rubinstein M.} (1979):
Option pricing: a simplified approach.
\emph{J. Finan. Econ.} 7, 229-263.

\bibitem{diener_diener}
\textsc{Diener F.,} and \textsc{Diener M.} (2004):
Asymptotics of the price oscillations of a European call option in a tree model.
\emph{Math. Finance} 14, 271-293.

\bibitem{follmer_schied}
\textsc{Föllmer H.,} and \textsc{Schied A.} (2011):
\emph{Stochastic Finance: An Introduction in Discrete Time, 3rd edition},
Berlin: Walter de Gruyter.

\bibitem{gobet}
\textsc{Gobet E.} (2001):
Analysis of the zigzag convergence for barrier options with binomial trees.
Working paper 536, Université Paris 6.

\bibitem{gobet_kohatsu}
\textsc{Gobet E.,} and \textsc{Kohatsu-Higa A.} (2003):
Computation of Greeks for barrier and lookback options using Malliavin calculus.
\emph{Electron. Comm. Probab.} 8, 51-62.

\bibitem{goldman_sosin}
\textsc{Goldman M.B., Sosin H.B.,} and \textsc{Gatto M.A.} (1979):
Path dependent options:"buy at the low, sell at the high".
\emph{J. Finance} 34, 1111-1127.

\bibitem{hull_white}
\textsc{Hull J.C.,} and \textsc{White A.D.} (1993):
Efficient procedures for valuing European and American path-dependent options.
\emph{J. Derivat.} 1, 21-31.

\bibitem{jiang_dai}
\textsc{Jiang L.,} and \textsc{Dai M.} (2004):
Convergence of binomial tree methods for European/American
path-dependent options.
\emph{SIAM J. Numer. Anal.} 42, 1094-1109.

\bibitem{joshi}
\textsc{Joshi M.} (2010):
Achieving higher order convergence for the prices of European options 
in binomial trees.
\emph{Math. Finance} 20, 89-103.

\bibitem{leisen_reimer}
\textsc{Leisen D.P.,} and \textsc{Reimer M.} (1996):
Binomial models for option valuation-examining and improving convergence.
\emph{Appl. Math. Finan.} 3,319-346.

\bibitem{lin_palmer}
\textsc{Lin J.,} and \textsc{Palmer K.} (2013):
Convergence of barrier option prices in the binomial model.
\emph{Math. Finance} 23, 318-338.

\bibitem{lyuu}
\textsc{Lyuu Y.D.} (2002):
\emph{Financial Engineering and Computation: 
Principles, Mathematics, Algorithms},
Cambridge: University Press.

\bibitem{pedersen}
\textsc{Pedersen J.} (1999):
Convergence of strategies: an approach using Clark-Haussman's formula.
\emph{Finance Stoch.} 3, 323-344.

\bibitem{walsh}
\textsc{Walsh J.} (2003):
The rate of convergence of the binomial tree scheme.
\emph{Finan. Stoch.} 7, 337-361.
\end{thebibliography}
\end{document}